\documentclass[11pt,a4paper]{article}
\usepackage[margin=1in]{geometry}
\usepackage{amsmath,amssymb,amsthm}
\usepackage{mathrsfs}
\usepackage{cite}
\usepackage{xcolor}
\usepackage{booktabs}
\usepackage{array}
\usepackage{tikz}
\usetikzlibrary{shapes.geometric, positioning, calc, arrows.meta}
\usepackage{hyperref}

\hypersetup{
    colorlinks=true,
    linkcolor=blue!80!black,
    citecolor=red!70!black,
    urlcolor=teal
}

\theoremstyle{definition}

\DeclareMathOperator{\wt}{wt}

\newtheorem{Definition}{Definition}[section]
\newtheorem{Theorem}[Definition]{Theorem}
\newtheorem{Lemma}[Definition]{Lemma}

\newtheorem{Corollary}[Definition]{Corollary}
\newtheorem{Example}[Definition]{Example}
\newtheorem{Remark}[Definition]{Remark}

\newtheorem{Result}[Definition]{Result}

\title{Graph-Based Linear Codes Associated with the Finite Ring $\mathbb{F}_p[x]/\langle x^4 \rangle$}

\usepackage{authblk}
\author[1]{Apurba Sarkar}
\author[1]{Kalyan Hansda}
\author[2]{Makhan Maji}
\affil[1]{Department of Mathematics, Visva-Bharati, Santiniketan - 731235, West Bengal, India.\thanks{apurbasarkar065@gmail.com, kalyanh4@gmail.com}}
\affil[2]{Indian Institute of Technology Madras, Chennai, TN - 600036, India.\thanks{makhan2maths@gmail.com}}

\begin{document}
\date{}

\maketitle

\begin{abstract}
We investigate the linear codes derived from the zero-divisor graph $\Gamma(R)$ of the finite local ring $R = \mathbb{F}_p[x]/\langle x^4 \rangle$ for an odd prime $p$. Using an equitable partition of $Z^*(R)$, we determine the parameters of the binary incidence code of $\Gamma(R)$ and construct induced bipartite subgraph codes achieving dual minimum distance $4$. Over $\mathbb{F}_p$, we parameterize the adjacency and Laplacian codes, derive the closed-form weight distribution of the Laplacian code via the MacWilliams identity, and establish its LCD property. Additionally, we determine the automorphism groups of the graph and its matrix codes. Finally, from extremal subgraphs of $\Gamma(R)$, we construct two families of Griesmer-optimal linear codes over $\mathbb{F}_p$, comprising a constant-weight minimal code and an optimal two-weight code.
\end{abstract}

\section{Introduction}
The interplay between algebraic coding theory and graph theory provides an effective framework for constructing structured linear codes with predictable parameters, high symmetry, and rich automorphism groups \cite{Dankelmann2013, Ling2004}. In modern information theory and secure communication, algebraic techniques play a pivotal role in designing error-correcting codes, distributed data storage protocols, and cryptographic primitives. A classical line of inquiry focuses on deriving linear codes from matrices associated with algebraic graphs, where structural graph invariants directly translate into fundamental code parameters and performance guarantees in information transmission \cite{Dankelmann2013, Saranya2021, Saranya2022}. In this context, zero-divisor graphs of commutative rings, originally introduced by Beck \cite{Beck1988} and later redefined by Anderson and Livingston \cite{Anderson1999} on the set of non-zero zero-divisors $Z^*(R)$, offer a direct geometric realization of ring-theoretic annihilations. In these graphs, adjacency reflects zero-product relations, naturally giving rise to regularities and partition symmetries that are well-suited for code generation \cite{Annamalai2021, Annamalai2022}.

Binary linear codes generated by vertex-edge incidence matrices have received considerable attention because their length, dimension, and minimum distance are directly determined by fundamental graph invariants \cite{Dankelmann2013}. In particular, Dankelmann et al. \cite{Dankelmann2013} established that the minimum distance of an incidence code over $\mathbb{F}_2$ equals the edge connectivity of the graph, while the dual code coincides with the cycle space of the graph, possessing a minimum distance equal to the graph's girth. Beyond incidence matrices, the row spaces of graph adjacency and combinatorial Laplacian matrices over finite fields $\mathbb{F}_p$ yield codes whose parameters reflect spectral and equitable partition properties \cite{Ling2004}. Furthermore, evaluating linear forms over defining sets formed by specific algebraic subsets such as extremal cliques or independent sets constitutes a powerful application of ring and module algebra in information theory, especially for constructing minimal linear codes \cite{Ashikhmin1998, Ding2007, Ding2015}. Such minimal codes are of significant interest in cryptography and network information theory due to their direct application in designing secret-sharing schemes with democratic access structures and secure multi-party computation protocols.

Recent investigations into zero-divisor graphs and their associated codes have highlighted the role of local ring structures in determining code parameters \cite{Annamalai2021, Annamalai2022, Annamalai2025}. In this paper, we explore the zero-divisor graph $\Gamma(R)$ of the finite local commutative ring $R = \mathbb{F}_{p}[x]/\langle x^{4}\rangle$, where $p$ is an odd prime. While the spectral and dynamical network invariants of $\Gamma(R)$ were investigated in \cite{Sarkar2026spectral}, its algebraic error-correcting code structures remain unexplored. The nilpotency index of $4$ induces an equitable partition of the zero-divisor set $Z^*(R)$, producing well-defined vertex classes characterized by their annihilation neighborhoods. This combinatorial organization provides an explicit framework for parameterizing multiple distinct families of graph-derived error-correcting codes.

The primary objective of this work is to systematically examine how the structural invariants of $\Gamma(R)$ govern code parameters across different algebraic formulations. First, we construct the binary incidence code of $\Gamma(R)$ and determine its exact parameters along with its dual code, whose error-correction capability is governed by the graph's girth. To expand the girth and achieve guaranteed double-error detection ($d^\perp = 4$), we construct binary codes from an induced complete bipartite subgraph between the maximal independent set and the dominating clique. Second, over the base field $\mathbb{F}_p$, we evaluate the linear codes arising from the block adjacency and Laplacian matrices. By exploiting the one-dimensional repetition structure of the dual Laplacian code, we obtain its closed-form weight enumerator via the MacWilliams identity and prove that it forms a linear complementary dual (LCD) code, which has important applications in protecting information against side-channel and fault-injection attacks. Furthermore, we determine the full graph automorphism group $\mathrm{Aut}(\Gamma(R))$ and characterize its relation to the permutation automorphism groups of the resulting matrix codes. Finally, by extracting defining sets from the maximal clique and maximal independent set of $\Gamma(R)$, we construct two families of Griesmer-optimal linear codes over $\mathbb{F}_p$ that achieve the Griesmer bound with equality, characterized as a constant-weight minimal code and an optimal two-weight code, respectively.

The remainder of this paper is organized as follows. Section~2 outlines the algebraic, graph-theoretic, and coding-theoretic preliminaries. Section~3 establishes the canonical partition of $Z^*(R)$ and determines the fundamental structural parameters of $\Gamma(R)$. Section~4 presents the core coding-theoretic constructions, including incidence codes, $p$-ary matrix codes, weight enumerators, automorphism groups, and Griesmer-optimal minimal codes.

\section{Preliminaries}
Let $p$ be an odd prime. Consider the finite local commutative ring $R \cong \mathbb{F}_{p}[x]/\langle x^{4}\rangle = \mathbb{F}_{p}+u\mathbb{F}_{p}+u^{2}\mathbb{F}_{p}+u^{3}\mathbb{F}_{p}$ ($u^{4}=0$) of characteristic $p$ and order $|R|=p^{4}$. An element $a = \sum_{i=0}^{3}a_{i}u^{i} \in R$ is a unit if and only if $a_{0} \in \mathbb{F}_{p}^{\times}$. Consequently, the unit group $R^{\times}$ and the set of non-zero zero-divisors $Z^{*}(R) = R \setminus (R^{\times} \cup \{0\})$ have cardinalities $|R^{\times}|=p^{3}(p-1)$ and $|Z^{*}(R)|=p^{3}-1$, respectively. The additive group $(R, +)$ forms a $4$-dimensional vector space over its residue field $\mathbb{F}_p$ equipped with the canonical basis $\mathcal{B} = \{1, u, u^2, u^3\}$. For any $y = a_0 + a_1 u + a_2 u^2 + a_3 u^3 \in R$, its coordinate vector is $[y] = (a_0, a_1, a_2, a_3)^T \in \mathbb{F}_p^4$.

Following Anderson and Livingston \cite{Anderson1999}, the zero-divisor graph $\Gamma(R)$ has vertex set $V(\Gamma(R))=Z^{*}(R)$, where two distinct vertices $x, y \in Z^{*}(R)$ are adjacent ($x \sim y$) if and only if $xy=0$. For a simple connected graph $\Gamma=(V,E)$, $\operatorname{diam}(\Gamma)$ denotes its diameter, $\operatorname{girth}(\Gamma)$ denotes the length of a shortest cycle in $\Gamma$, and $\lambda(\Gamma)$ denotes its edge connectivity \cite{Balakrishnan2012}.

Let $\mathbb{F}_{q}$ denote the finite field with $q$ elements. A $q$-ary linear code $\mathcal{C}$ of length $n$, dimension $k$, and minimum Hamming distance $d$ is a $k$-dimensional subspace of $\mathbb{F}_{q}^{n}$, denoted as an $[n,k,d]_{q}$ code \cite{Ling2004}. The Euclidean dual code is defined as $\mathcal{C}^{\perp} = \{y \in \mathbb{F}_{q}^{n} : x \cdot y = 0 \text{ for all } x \in \mathcal{C}\}$. The Euclidean hull of a linear code $\mathcal{C}$ is $\operatorname{Hull}(\mathcal{C}) = \mathcal{C} \cap \mathcal{C}^\perp$. A code $\mathcal{C}$ is called a \emph{linear complementary dual} (LCD) code if $\operatorname{Hull}(\mathcal{C}) = \{\mathbf{0}\}$.

For an $[n,k,d]_{q}$ code $\mathcal{C}$, let $A_{w}$ denote the number of codewords of Hamming weight $w$. The polynomial $W_{\mathcal{C}}(z) = \sum_{w=0}^{n} A_{w} z^{w}$ is the weight enumerator of $\mathcal{C}$. The weight enumerators of $\mathcal{C}$ and its dual $\mathcal{C}^{\perp}$ are related via the MacWilliams identity \cite{Ling2004}:
\begin{equation}
W_{\mathcal{C}}(z) = \frac{1}{|\mathcal{C}^{\perp}|} (1 + (q-1)z)^{n} W_{\mathcal{C}^{\perp}}\left(\frac{1-z}{1+(q-1)z}\right).
\end{equation}
The permutation automorphism group $\mathrm{PAut}(\mathcal{C})$ consists of all permutations $\pi \in S_{n}$ preserving $\mathcal{C}$ setwise:
\begin{equation}
\mathrm{PAut}(\mathcal{C}) = \{\pi \in S_{n} : (c_{\pi(1)}, c_{\pi(2)}, \dots, c_{\pi(n)}) \in \mathcal{C} \text{ for all } c \in \mathcal{C}\}.
\end{equation}
It is well-known that $\mathrm{PAut}(\mathcal{C}) = \mathrm{PAut}(\mathcal{C}^\perp)$.

For a finite simple graph $\Gamma=(V,E)$ with vertex set $V=\{v_{1},\dots,v_{|V|}\}$ and edge set $E=\{e_{1},\dots,e_{|E|}\}$, its vertex-edge incidence matrix is the $|V| \times |E|$ binary matrix $G = (g_{ij})$ defined by $g_{ij} = 1$ if $v_{i}$ is incident with $e_{j}$, and $0$ otherwise. The row space of $G$ over $\mathbb{F}_{2}$ defines the binary incidence code $\mathcal{C}_{2}(G)$.
\begin{Result}[\cite{Dankelmann2013}]\label{res:dankelmann}
Let $\Gamma = (V,E)$ be a connected graph with vertex-edge incidence matrix $G$. Then the binary incidence code $\mathcal{C}_2(G)$ is an $[|E|, |V|-1, \lambda(\Gamma)]_2$ linear code. Furthermore, its dual $\mathcal{C}_2(G)^\perp$ corresponds to the cycle space of $\Gamma$, possessing parameters $[|E|, |E|-|V|+1, \operatorname{girth}(\Gamma)]_2$.
\end{Result}

Let $A(\Gamma)$ denote the $(0,1)$-adjacency matrix of $\Gamma$, and let $D(\Gamma) = \operatorname{diag}(\deg(v_1), \dots, \deg(v_{|V|}))$ be its degree matrix. The combinatorial Laplacian matrix is defined as $L(\Gamma) = D(\Gamma) - A(\Gamma)$. The row spaces of $A(\Gamma)$ and $L(\Gamma)$ over a finite field $\mathbb{F}_p$ define the $p$-ary adjacency code $\mathcal{C}_{\mathbb{F}_p}(A(\Gamma))$ and Laplacian code $\mathcal{C}_{\mathbb{F}_p}(L(\Gamma))$, respectively.

For an ordered defining set $D = \{d_{1}, d_{2}, \dots, d_{n}\} \subseteq \mathbb{F}_{p}^{m}$, the linear evaluation code $\mathcal{C}_{D}$ over $\mathbb{F}_{p}$ is defined by:
\begin{equation}
\mathcal{C}_{D} = \{c(x) = (x \cdot d_{1}, x \cdot d_{2}, \dots, x \cdot d_{n}) : x \in \mathbb{F}_{p}^{m}\} \subseteq \mathbb{F}_{p}^{n}.
\end{equation}
The generator matrix of $\mathcal{C}_D$ is $G_D = \begin{pmatrix} d_1 & d_2 & \cdots & d_n \end{pmatrix} \in \mathbb{F}_p^{m \times n}$, with $\dim(\mathcal{C}_D) = \operatorname{rank}_{\mathbb{F}_p}(G_D)$. A codeword $c \in \mathcal{C}$ is minimal if $\operatorname{supp}(c') \subseteq \operatorname{supp}(c)$ implies $c' = \alpha c$ for some $\alpha \in \mathbb{F}_p$. If every non-zero codeword is minimal, $\mathcal{C}$ is called a minimal linear code.
\begin{Result}[\cite{Ashikhmin1998}]\label{res:ashikhmin}
A linear code over $\mathbb{F}_{q}$ is minimal if
\begin{equation}
\frac{w_{\max}}{w_{\min}} < \frac{q}{q-1},
\end{equation}
where $w_{\max}$ and $w_{\min}$ denote the maximum and minimum non-zero Hamming weights in $\mathcal{C}$.
\end{Result}
Furthermore, an $[n,k,d]_{q}$ code is Griesmer-optimal if its length meets the Griesmer bound with equality \cite{Ling2004}:
\begin{equation}
n = \sum_{i=0}^{k-1} \left\lceil \frac{d}{q^{i}} \right\rceil.
\end{equation}

\section{Structural Properties and Graph Invariants of $\Gamma(R)$}
Because error-correcting codes derived from graph matrices are intrinsically governed by the topological parameters of the underlying graph such as order, size, girth, and connectivity, a precise combinatorial classification of $\Gamma(R)$ is essential before parameterizing its associated linear codes.  The graph-theoretic invariants and spectral characteristics of $\Gamma(R)$ were recently examined in \cite{Sarkar2026spectral}; here, we formulate its structural properties through the lens of coding-theoretic defining sets and equitable matrix partitions. In this section, we construct the zero-divisor graph $\Gamma(R)$ of the local ring $R = \mathbb{F}_p + u\mathbb{F}_p + u^2\mathbb{F}_p + u^3\mathbb{F}_p$ ($u^4 = 0$) by decomposing $Z^*(R)$ into seven $u$-adic equivalence classes. The ideal structure dictates that $u^3 \cdot z = 0$ for all $z \in Z^*(R)$, making the class $S_{u^3}$ a universal dominating set that centrally coordinates all metric, coloring, and connectivity invariants across the graph topology.

Let $Z^*(R)$ denote the set of non-zero zero-divisors of $R$. We partition $Z^*(R)$ into seven distinct $u$-adic classes according to their algebraic representations (see Figure~\ref{Fig1}):
\begin{align*}
  S_u           &= \{xu \mid x \in \mathbb{F}_p^*\}, \\
  S_{u^2}       &= \{xu^2 \mid x \in \mathbb{F}_p^*\}, \\
  S_{u^3}       &= \{xu^3 \mid x \in \mathbb{F}_p^*\}, \\
  S_{u+u^2}     &= \{xu + yu^2 \mid x, y \in \mathbb{F}_p^*\}, \\
  S_{u+u^3}     &= \{xu + yu^3 \mid x, y \in \mathbb{F}_p^*\}, \\
  S_{u^2+u^3}   &= \{xu^2 + yu^3 \mid x, y \in \mathbb{F}_p^*\}, \\
  S_{u+u^2+u^3} &= \{xu + yu^2 + zu^3 \mid x, y, z \in \mathbb{F}_p^*\}.
\end{align*}
The cardinalities of these partition classes are:
\[
|S_u| = |S_{u^2}| = |S_{u^3}| = p - 1, \quad |S_{u+u^2}| = |S_{u+u^3}| = |S_{u^2+u^3}| = (p - 1)^2, \quad \text{and} \quad |S_{u+u^2+u^3}| = (p - 1)^3.
\]
Since $Z^*(R) = S_u \cup S_{u^2} \cup S_{u^3} \cup S_{u+u^2} \cup S_{u+u^3} \cup S_{u^2+u^3} \cup S_{u+u^2+u^3}$ is a disjoint union, the order of $\Gamma(R)$ is:
\begin{align*}
|V(\Gamma(R))| &= |S_u| + |S_{u^2}| + |S_{u^3}| + |S_{u+u^2}| + |S_{u+u^3}| + |S_{u^2+u^3}| + |S_{u+u^2+u^3}| \\
               &= 3(p-1) + 3(p-1)^2 + (p-1)^3 \\
               &= p^3 - 1.
\end{align*}

Since $u^4 = 0$, the adjacency relations between vertices in these partition classes are completely determined by the following algebraic annihilation criteria:
\begin{enumerate}
    \item Universal Dominating Subgraph: The ideal class $S_{u^3} = u^3 R \setminus \{0\}$ annihilates every non-zero zero-divisor; hence, every vertex of $S_{u^3}$ is adjacent to all other $p^3 - 2$ vertices in $\Gamma(R)$.
    \item Intermediate Nilpotency Connections: Vertices in $S_{u^2}$ and $S_{u^2+u^3}$ belong to the principal ideal $u^2 R \setminus \{0\}$. Since $(u^2 R)^2 \subseteq u^4 R = \{0\}$, the subset $S_{u^2} \cup S_{u^2+u^3}$ induces a complete subgraph of order $p(p - 1)$ with complete cross-adjacencies between $S_{u^2}$ and $S_{u^2+u^3}$.
    \item Peripheral Independence: Any element whose representation contains a non-zero $u$-coefficient ($v \in S_u \cup S_{u+u^2} \cup S_{u+u^3} \cup S_{u+u^2+u^3}$) satisfies $\operatorname{Ann}_R(v) = u^3 R$. Consequently, no edges exist among or between these four classes; each connects exclusively to $S_{u^3}$.
\end{enumerate}

Accounting for all internal edges within the complete induced subgraphs and cross-edges between mutually annihilating classes, the edge set decomposes into:
\[
E(\Gamma(R)) = E\big(S_{u^3}, \, Z^*(R) \setminus S_{u^3}\big) \cup E\big(\Gamma[S_{u^3}]\big) \cup E\big(S_{u^2}, \, S_{u^2+u^3}\big) \cup E\big(\Gamma[S_{u^2}]\big) \cup E\big(\Gamma[S_{u^2+u^3}]\big).
\]
Consequently, the total size of $\Gamma(R)$ evaluates to:
\begin{align*}
|E(\Gamma(R))| &= (p - 1)(p^3 - p) + \binom{p - 1}{2} + (p - 1)(p - 1)^2 + \binom{p - 1}{2} + \binom{(p - 1)^2}{2} \\
&= (p - 1)^2(p^2 + p) + (p - 1)(p - 2) + (p - 1)^3 + \frac{(p - 1)^2\left((p - 1)^2 - 1\right)}{2} \\
&= \frac{1}{2}(3p^4 - 4p^3 - p^2 + 2).
\end{align*}

\begin{Example}
For $p=3$, $R = \mathbb{F}_3 + u\mathbb{F}_3 + u^2 \mathbb{F}_3 + u^3 \mathbb{F}_3$. The $7$ partition sets are $S_u=\{u, 2u\}$, $S_{u^2}=\{u^2, 2u^2\}$, $S_{u^3}=\{u^3, 2u^3\}$, $S_{u+u^2}=\{u+u^2, 2u+u^2, u+2u^2, 2u+2u^2\}$, $S_{u+u^3}=\{u+u^3, 2u+u^3, u+2u^3, 2u+2u^3\}$, $S_{u^2+u^3}=\{u^2+u^3, 2u^2+u^3, u^2+2u^3, 2u^2+2u^3\}$, and $S_{u+u^2+u^3}=\{u+u^2+u^3, 2u+u^2+u^3, u+2u^2+u^3, u+u^2+2u^3, 2u+2u^2+u^3, 2u+u^2+2u^3, u+2u^2+2u^3, 2u+2u^2+2u^3\}$. The graph $\Gamma(R)$ has $|V(\Gamma(R))| = 26$ vertices and $|E(\Gamma(R))| = 64$ edges (see Figure~\ref{Fig2}).
\end{Example}

To determine the minimum distance of the cycle space and the structural compactness of the graph, we evaluate its diameter and girth.

\begin{Theorem}\label{thm:diam-girth}
Let $p$ be an odd prime. The zero-divisor graph $\Gamma(R)$ satisfies:
\begin{enumerate}
    \item[(i)] $\operatorname{diam}(\Gamma(R)) = 2$,
    \item[(ii)] $\operatorname{girth}(\Gamma(R)) = 3$.
\end{enumerate}
\end{Theorem}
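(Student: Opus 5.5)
Both parts follow from the fact that $S_{u^3}$ is universal: every $w\in S_{u^3}$ is adjacent to every other vertex of $\Gamma(R)$, because $u^3 z=0$ for all $z\in Z^*(R)$. We also use that $\Gamma(R)$ is not complete, which the peripheral independence criterion supplies.

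For (i), we first bound the diameter from above. Let $x,y\in Z^*(R)$ be distinct. If $xy=0$, then $d(x,y)=1$. Otherwise choose $w\in S_{u^3}$ with $w\ne x,y$. This is possible because $|S_{u^3}|=p-1\ge 2$ for odd $p$, and at most one of $x,y$ can lie in $S_{u^3}$ without being adjacent to the other (any vertex of $S_{u^3}$ is adjacent to everything). Then $x\sim w\sim y$ is a path of length $2$. Hence $\operatorname{diam}(\Gamma(R))\le 2$, and the graph is connected.

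For the lower bound, take $x=u$ and $y=2u$; these are distinct since $p$ is odd. Their product is $2u^2\neq 0$, so they are non-adjacent and $d(u,2u)=2$. This example is an instance of the peripheral independence criterion, which says $S_u$ is an independent set. Therefore $\operatorname{diam}(\Gamma(R))=2$.

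For (ii), a simple graph always has girth at least $3$, so it suffices to exhibit a triangle. Take the vertices $u^2$, $u^3$ and $2u^3$. They are distinct because $p\ge 3$, and each pairwise product lies in $u^5R=\{0\}$ or $u^6R=\{0\}$. So they form a $3$-cycle, and $\operatorname{girth}(\Gamma(R))=3$.

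The argument contains no real obstacle. The only care needed is the use of $p$ odd: it guarantees $|S_{u^3}|\ge 2$, so the triangle exists and a middle vertex $w$ distinct from both endpoints is always available. It also makes $u\ne 2u$, which supplies the non-adjacent pair.
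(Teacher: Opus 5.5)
Your proof is correct and follows essentially the same route as the paper: the universal class $S_{u^3}$ supplies a common neighbour for any non-adjacent pair, and two vertices of $S_{u^3}$ together with one further vertex form a triangle. The differences are cosmetic. You use $u^2$ as the third triangle vertex where the paper uses a vertex of $S_u$, and you exhibit the explicit non-adjacent pair $u,2u$ where the paper only notes that $\Gamma(R)$ is not complete (also, a non-adjacent pair cannot contain a vertex of $S_{u^3}$ at all, so any $w\in S_{u^3}$ already works as the middle vertex).
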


\begin{proof}
(i) Every vertex in $Z^*(R)$ is adjacent to each element of $S_{u^3}$. Thus, any pair of non-adjacent vertices $x, y \in Z^*(R)$ shares a common neighbor $z \in S_{u^3}$, establishing a path $x \sim z \sim y$ of length $2$. Since $\Gamma(R)$ is not complete, $\operatorname{diam}(\Gamma(R)) = 2$.

(ii) For any odd prime $p \ge 3$, $|S_{u^3}| = p - 1 \ge 2$. Selecting two distinct vertices $u_1, u_2 \in S_{u^3}$ and any vertex $v \in S_u$, the adjacencies $u_1 \sim u_2$, $u_1 \sim v$, and $u_2 \sim v$ form a triangle $u_1 \sim u_2 \sim v \sim u_1$. Thus, $\operatorname{girth}(\Gamma(R)) = 3$.
\end{proof}

To characterize the extremal dense subgraphs that will serve as algebraic defining sets for linear codes, we evaluate the clique number and chromatic number of $\Gamma(R)$.

\begin{Theorem}\label{thm:clique-chromatic}
The clique number and chromatic number of $\Gamma(R)$ coincide and are given by $\omega(\Gamma(R)) = \chi(\Gamma(R)) = p^2 - 1$.
\end{Theorem}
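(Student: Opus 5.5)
We will prove the statement by the standard sandwich $\omega(\Gamma(R)) \le \chi(\Gamma(R))$, exhibiting a clique of size $p^2-1$ and a proper colouring with $p^2-1$ colours. The natural clique is the set of non-zero elements of the ideal $u^2R$, namely $K = S_{u^2}\cup S_{u^3}\cup S_{u^2+u^3} = u^2R\setminus\{0\}$. Since $(u^2R)^2\subseteq u^4R=\{0\}$, any two distinct elements of $K$ are adjacent. This is items (1) and (2) of the annihilation criteria. Hence $K$ is a clique and
\[
|K| = (p-1)+(p-1)+(p-1)^2 = p^2-1,
\]
which gives $\omega(\Gamma(R))\ge p^2-1$.

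For the colouring, we use the peripheral independence criterion (3). The set $I = S_u\cup S_{u+u^2}\cup S_{u+u^3}\cup S_{u+u^2+u^3}$, consisting of all elements with non-zero $u$-coefficient, is an independent set, because $\operatorname{Ann}_R(v)=u^3R$ for each $v\in I$ and $u^3R\cap I=\emptyset$. We assign the $p^2-1$ vertices of $K$ pairwise distinct colours. Every vertex of $I$ then receives a single common colour, namely that of one fixed vertex $w\in S_{u^2}$.

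The main check is that this colouring is proper. Vertices of $I$ are mutually non-adjacent, so the only concern is an edge from $I$ to $w$. Take $v\in I$ and write $v=au+\cdots$ with $a\neq 0$, and $w=bu^2$ with $b\neq 0$. Then $vw=abu^3\neq 0$, so $v$ and $w$ are not adjacent. Equivalently, the neighbours of $v$ lie only in $S_{u^3}$, and $w\notin S_{u^3}$. Thus $\chi(\Gamma(R))\le p^2-1$.

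To get the matching upper bound on $\omega$, we argue as follows. A clique contains at most one vertex of $I$, and the remaining vertices lie in $K$. However, any vertex of $I$ in a clique forces all other clique members into $S_{u^3}$, so such a clique has size at most $1+(p-1)=p$. Otherwise the clique lies in $K$. Either way $\omega\le p^2-1$, which also follows from $\omega\le\chi$. Combining the bounds gives $\omega(\Gamma(R))=\chi(\Gamma(R))=p^2-1$. No step is a real obstacle; the only point requiring care is verifying that the colour reused on $I$ belongs to a vertex outside $S_{u^3}$.
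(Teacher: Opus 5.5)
Your proof is correct and follows the paper's argument: it uses the same clique $K=u^2R\setminus\{0\}$ of size $p^2-1$ and the same colouring, which reuses one colour from $K\setminus S_{u^3}$ on the independent set $Z^*(R)\setminus K$. Your sandwich $\omega\le\chi\le p^2-1$, together with your explicit bound on cliques meeting $I$, actually justifies the maximality of $K$, which the paper only asserts.
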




\begin{proof}
Consider the subset $K = S_{u^2} \cup S_{u^3} \cup S_{u^2+u^3} \subset Z^*(R)$. By the adjacency relations, every vertex in $K$ is adjacent to all other vertices in $K$. Hence, $K$ induces a complete subgraph of order
\[
|K| = |S_{u^2}| + |S_{u^3}| + |S_{u^2+u^3}| = (p-1) + (p-1) + (p-1)^2 = p^2 - 1.
\]
Because no larger set of mutually adjacent vertices exists in $Z^*(R)$, we have $\omega(\Gamma(R)) = p^2 - 1$, which requires at least $p^2 - 1$ distinct colors in any proper vertex coloring; thus, $\chi(\Gamma(R)) \ge p^2 - 1$.

Assigning $p^2 - 1$ distinct colors to the vertices of $K$, the remaining vertices $Z^*(R) \setminus K = S_u \cup S_{u+u^2} \cup S_{u+u^3} \cup S_{u+u^2+u^3}$ form an independent set with adjacencies directed exclusively to $S_{u^3}$. Reusing a single color assigned to $K \setminus S_{u^3}$ across all vertices in $Z^*(R) \setminus K$ yields a valid proper $(p^2 - 1)$-coloring. Consequently, $\chi(\Gamma(R)) = p^2 - 1 = \omega(\Gamma(R))$.
\end{proof}

Finally, to establish the exact minimum distance of the primal binary incidence code, we determine the fault-tolerance and connectivity parameters of $\Gamma(R)$.

\begin{Theorem}\label{thm:connectivity}
The vertex connectivity and edge connectivity of $\Gamma(R)$ satisfy
\[
\kappa(\Gamma(R)) = \lambda(\Gamma(R)) = \delta(\Gamma(R)) = p - 1.
\]
\end{Theorem}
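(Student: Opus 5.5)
The plan is to use Whitney's inequality $\kappa(\Gamma) \le \lambda(\Gamma) \le \delta(\Gamma)$. With it, the statement reduces to two facts: $\delta(\Gamma(R)) = p-1$, and $\kappa(\Gamma(R)) \ge p-1$.

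\textbf{Minimum degree.} I would read off degrees from the three adjacency criteria. Each vertex of $S_{u^3}$ has degree $p^3-2$. A vertex of $S_{u^2}\cup S_{u^2+u^3}$ is adjacent to the other vertices of the clique $K$ of Theorem~\ref{thm:clique-chromatic}, so its degree is $p^2-2$. Every vertex $v \in S_u \cup S_{u+u^2}\cup S_{u+u^3}\cup S_{u+u^2+u^3}$ satisfies $\operatorname{Ann}_R(v)=u^3R$, so $N(v)=S_{u^3}$ and $\deg v = p-1$. For $p\ge 3$ we have $p-1 < p^2-2 < p^3-2$, hence $\delta(\Gamma(R))=p-1$. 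This immediately gives $\kappa \le \lambda \le p-1$. Concretely, deleting $S_{u^3}$ isolates any vertex of $S_u$, and deleting the $p-1$ edges at such a vertex disconnects it.

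\textbf{Lower bound on $\kappa$.} Let $W\subseteq Z^*(R)$ with $|W| \le p-2$. Since $|S_{u^3}|=p-1$, some vertex $z\in S_{u^3}\setminus W$ survives. Every remaining vertex is adjacent to $z$ by the universal domination criterion. Hence $\Gamma(R)-W$ is connected, since it contains a spanning star centred at $z$, and so $\kappa(\Gamma(R)) \ge p-1$.

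Combining the two steps gives $p-1\le\kappa\le\lambda\le\delta=p-1$. The only point needing care is the precise neighbourhood claim $N(v)=S_{u^3}$ for vertices with nonzero $u$-coefficient. For this I would write $v=a_1u+a_2u^2+a_3u^3$ with $a_1\ne0$ and note that $vw=0$ forces $w\in u^3R$ by comparing coefficients of $u^2$ and $u^3$. This is an elementary check, so I expect no real obstacle.
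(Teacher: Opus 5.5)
Your proposal is correct and follows essentially the same route as the paper: Whitney's inequality $\kappa\le\lambda\le\delta$, the minimum degree $p-1$ attained at vertices with nonzero $u$-coefficient (whose neighbourhood is exactly $S_{u^3}$), and the observation that deleting at most $p-2$ vertices leaves a surviving vertex of $S_{u^3}$ adjacent to everything else, which forces $\kappa\ge p-1$. Your explicit degree count for all three vertex types, the coefficient check for $N(v)=S_{u^3}$, and the spanning-star argument make precise what the paper states more briefly.
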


\begin{proof}
The minimum degree of $\Gamma(R)$ is realized at any vertex $v \in S_u$, which is adjacent solely to $S_{u^3}$. Thus, $\delta(\Gamma(R)) = |S_{u^3}| = p - 1$, which yields $\kappa(\Gamma(R)) \le \lambda(\Gamma(R)) \le p - 1$ by Whitney's inequality.

Conversely, every vertex in $Z^*(R) \setminus S_{u^3}$ has $S_{u^3}$ in its neighborhood. Disconnecting $\Gamma(R)$ requires the removal of the entire dominating set $S_{u^3}$. Since $|S_{u^3}| = p - 1$, no set of fewer than $p - 1$ vertices can disconnect $\Gamma(R)$, giving $\kappa(\Gamma(R)) \ge p - 1$. Therefore,
\[
p - 1 \le \kappa(\Gamma(R)) \le \lambda(\Gamma(R)) \le \delta(\Gamma(R)) = p - 1,
\]
which forces $\kappa(\Gamma(R)) = \lambda(\Gamma(R)) = p - 1$.
\end{proof}

\begin{Remark}
The equality $\lambda(\Gamma(R)) = p - 1$ plays a direct role in the code constructions of Section~4, where the minimum Hamming distance of the binary linear code generated by the vertex--edge incidence matrix of $\Gamma(R)$ is governed by the edge connectivity $\lambda(\Gamma(R)) = p - 1$.
\end{Remark}

\section{Linear Codes Associated with $\Gamma(R)$}
In this section, we construct and analyze several families of linear codes associated with the zero-divisor graph $\Gamma(R)$. We parameterize binary incidence and cycle codes, evaluate $p$-ary codes generated by the adjacency and Laplacian matrices along with their automorphism and LCD properties, and construct optimal minimal linear codes derived from extremal graph defining sets.

\subsection{Binary Codes from the Incidence Matrix and Cycle Spaces of $\Gamma(R)$}

This subsection investigates the binary linear codes generated by the incidence matrices of $\Gamma(R)$ and its induced subgraphs. We evaluate the parameters of the incidence code $\mathcal{C}_2(G)$ alongside its dual cycle code $\mathcal{C}_2(G)^\perp$, whose minimum distance is governed by $\operatorname{girth}(\Gamma(R))$. In addition, to enhance error-detection capabilities, we isolate the complete bipartite subgraph $\Gamma_{\mathrm{bip}} = \Gamma[V_1, V_3]$ and determine the exact parameters of its incidence and girth-$4$ cycle code.

Let $\mathbf{1}_{k}$ denote the $1 \times k$ all-ones row vector, $0_{m \times k}$ denote the $m \times k$ zero matrix, and $Q(K_m)$ denote the $m \times \binom{m}{2}$ incidence matrix of the complete graph $K_m$. 
For positive integers $a$ and $b$, we define the block dispersion matrices:
\[
D_a^{(b)} = I_a \otimes \mathbf{1}_b = \begin{pmatrix}
\mathbf{1}_b & 0 & \cdots & 0 \\
0 & \mathbf{1}_b & \cdots & 0 \\
\vdots & \vdots & \ddots & \vdots \\
0 & 0 & \cdots & \mathbf{1}_b
\end{pmatrix} \in \mathbb{F}_2^{a \times ab},
\]

and
\[
E_a^{(b)} = \mathbf{1}_a \otimes I_b = 
\begin{pmatrix}
I_b & I_b & \cdots & I_b
\end{pmatrix} \in \mathbb{F}_2^{b \times ab}.
\]

We establish an equitable partition of the vertex set $Z^*(R) = V_1 \cup V_{24} \cup V_3$, where:
\begin{itemize}
    \item $V_1 = S_u \cup S_{u+u^2} \cup S_{u+u^3} \cup S_{u+u^2+u^3}$ with $|V_1| = p^2(p-1)$,
    \item $V_{24} = S_{u^2} \cup S_{u^2+u^3}$ with $|V_{24}| = (p-1) + (p-1)^2 = p(p-1)$,
    \item $V_3 = S_{u^3}$ with $|V_3| = p-1$.
\end{itemize}

The edge set $E(\Gamma(R))$ decomposes into four disjoint sets: the bipartite edges $[V_1, V_3]$ of size $p^2(p-1)^2$, the bipartite edges $[V_{24}, V_3]$ of size $p(p-1)^2$, the internal clique edges $[V_{24}, V_{24}]$ of size $\binom{p(p-1)}{2}$, and the internal clique edges $[V_3, V_3]$ of size $\binom{p-1}{2}$. Ordering the rows by $(V_1, V_{24}, V_3)$ and the columns by $([V_1, V_3], \, [V_{24}, V_3], \, [V_{24}, V_{24}], \, [V_3, V_3])$, the complete incidence matrix $Q(\Gamma(R))$ of size $(p^3-1) \times \frac{1}{2}(3p^4 - 4p^3 - p^2 + 2)$ is given explicitly by:
\[
Q(\Gamma(R)) = \begin{pmatrix}
D_{p^2(p-1)}^{(p-1)} & 0_{p^2(p-1) \times p(p-1)^2} & 0_{p^2(p-1) \times \binom{p(p-1)}{2}} & 0_{p^2(p-1) \times \binom{p-1}{2}} \\
0_{p(p-1) \times p^2(p-1)^2} & D_{p(p-1)}^{(p-1)} & Q(K_{p(p-1)}) & 0_{p(p-1) \times \binom{p-1}{2}} \\
E_{p^2(p-1)}^{(p-1)} & E_{p(p-1)}^{(p-1)} & 0_{(p-1) \times \binom{p(p-1)}{2}} & Q(K_{p-1})
\end{pmatrix}.
\]

\begin{Example}
For $p = 3$, $R = \mathbb{F}_3 + u\mathbb{F}_3 + u^2\mathbb{F}_3 + u^3\mathbb{F}_3$ ($u^4 = 0$). The zero-divisor graph $\Gamma(R)$ has order $|V(\Gamma(R))| = 26$ and size $|E(\Gamma(R))| = 64$. The partition block sizes are $|V_1| = 18$, $|V_{24}| = 6$, and $|V_3| = 2$. The incidence matrix $Q(\Gamma(R))$ is a $26 \times 64$ binary matrix of the form:
\[
Q(\Gamma(R)) = 
\begin{pmatrix}
D_{18}^{(2)} & 0_{18 \times 12} & 0_{18 \times 15} & 0_{18 \times 1} \\
0_{6 \times 36} & D_{6}^{(2)} & Q(K_6) & 0_{6 \times 1} \\
E_{18}^{(2)} & E_{6}^{(2)} & 0_{2 \times 15} & Q(K_2)
\end{pmatrix}_{26 \times 64},
\]
where $D_k^{(2)} = I_k \otimes \begin{pmatrix} 1 & 1 \end{pmatrix}$, $E_k^{(2)} = \mathbf{1}_k \otimes I_2$, and $Q(K_2) = \begin{pmatrix} 1 \\ 1 \end{pmatrix}$. The rows of $Q(\Gamma(R))$ generate a $[64, 25, 2]_2$ binary linear code.
\end{Example}

The binary linear code $\mathcal{C}_2(G) \subseteq \mathbb{F}_2^{|E|}$ is the row space of $G$ over the finite field $\mathbb{F}_2$. Dankelmann et al.\ \cite{Dankelmann2013} established that for any connected graph $\Gamma$, the main parameters of its binary incidence code $\mathcal{C}_2(G)$ are given by $[|E|, \, |V| - 1, \, \lambda(\Gamma)]_2$, where $\lambda(\Gamma)$ is the edge connectivity of $\Gamma$. Furthermore, over $\mathbb{F}_2$, the dual code $\mathcal{C}_2(G)^\perp$ corresponds to the cycle space of $\Gamma$, whose minimum distance equals $\operatorname{girth}(\Gamma)$.

Using the order $|V(\Gamma(R))| = p^3 - 1$, size $|E(\Gamma(R))| = \frac{1}{2}(3p^4 - 4p^3 - p^2 + 2)$, edge connectivity $\lambda(\Gamma(R)) = p - 1$, and girth $\operatorname{girth}(\Gamma(R)) = 3$ established in Section~3, we obtain the main theorem for incidence codes of $\Gamma(R)$.

The following results  characterize the parameters of the binary incidence and dual cycle codes of $\Gamma(R)$ and its induced bipartite subgraph $\Gamma[V_1, V_3]$, showing how edge connectivity and shortest cycles dictate their error-correction capabilities.

\begin{Theorem}\label{thm:incidence-code}
Let $p$ be an odd prime. The binary linear code $C_2(G)$ generated by the incidence matrix $G$ of the zero-divisor graph $\Gamma(R)$ is a
$\left[ \frac{1}{2}(3p^4 - 4p^3 - p^2 + 2), \quad p^3 - 2, \quad p - 1 \right]_2$ linear code. Moreover, its dual code $C_2(G)^\perp$ is a
$\left[ \frac{1}{2}(3p^4 - 4p^3 - p^2 + 2), \quad \frac{1}{2}(3p^4 - 6p^3 - p^2 + 6), \quad 3 \right]_2$
linear code.
\end{Theorem}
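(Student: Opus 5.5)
The plan is to apply Result~\ref{res:dankelmann} to the graph $\Gamma = \Gamma(R)$ and then feed in the invariants computed in Section~3. The only hypothesis of Result~\ref{res:dankelmann} is connectedness. This follows from Theorem~\ref{thm:diam-girth}(i): since $\operatorname{diam}(\Gamma(R)) = 2 < \infty$, the graph is connected. More directly, every vertex is adjacent to every element of $S_{u^3}$, which is nonempty.

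With connectedness in hand, Result~\ref{res:dankelmann} says that $\mathcal{C}_2(G)$ is an $[|E|, |V|-1, \lambda(\Gamma)]_2$ code. I will substitute three values.
\begin{itemize}
\item The order is $|V(\Gamma(R))| = p^3-1$, so the dimension is $p^3-2$.
\item The size $|E(\Gamma(R))| = \frac12(3p^4-4p^3-p^2+2)$ is already computed in Section~3.
\item The edge connectivity is $\lambda(\Gamma(R)) = p-1$, by Theorem~\ref{thm:connectivity}.
\end{itemize}
These give the stated parameters for the primal code.

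For the dual, the same result identifies $\mathcal{C}_2(G)^\perp$ with the cycle space, with parameters $[|E|, |E|-|V|+1, \operatorname{girth}(\Gamma)]_2$. The dimension needs a short arithmetic check:
\[
\tfrac12(3p^4-4p^3-p^2+2) - (p^3-1) + 1 = \tfrac12(3p^4-6p^3-p^2+6).
\]
Theorem~\ref{thm:diam-girth}(ii) gives $\operatorname{girth}(\Gamma(R)) = 3$, so the dual minimum distance is $3$.

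There is no real obstacle here; the result is a direct instantiation. The only points needing care are confirming that the edge count from Section~3 is right and that the dimension arithmetic is correct.

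As a sanity check, take $p=3$. The edge count gives $\frac12(243-108-9+2) = 64$ edges and the primal dimension is $25$, matching the preceding example. The dual then has dimension $\frac12(243-162-9+6) = 39$, and indeed $64-25 = 39$.

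If one wanted the argument to be self-contained rather than citing Dankelmann et al., the intrinsic reasons are these:
\begin{itemize}
\item The rank over $\mathbb{F}_2$ is $|V|-1$ because the only dependency among the rows is that their total sum is zero.
\item Nonzero codewords of $\mathcal{C}_2(G)$ are exactly the edge cuts, so the minimum weight is $\lambda$.
\item Codewords of the dual are exactly the edge sets in which every vertex has even degree, so the minimum weight is the girth.
\end{itemize}
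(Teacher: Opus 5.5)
Your proposal is correct and follows the paper's proof: you establish connectedness, apply the Dankelmann et al.\ result with $|V| = p^3-1$, $|E| = \frac{1}{2}(3p^4-4p^3-p^2+2)$, $\lambda(\Gamma(R)) = p-1$ and $\operatorname{girth}(\Gamma(R)) = 3$, and then read off the dual parameters. Your dual dimension $|E|-|V|+1$ is the same quantity as the paper's rank--nullity expression $|E|-(p^3-2)$, and your $p=3$ check ($64$, $25$, $39$) agrees with the paper's example.
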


\begin{proof}
Since $\Gamma(R)$ is connected, applying the general result of Dankelmann et al.\ \cite{Dankelmann2013} yields length $n = |E(\Gamma(R))| = \frac{1}{2}(3p^4 - 4p^3 - p^2 + 2)$, dimension $k = |V(\Gamma(R))| - 1 = (p^3 - 1) - 1 = p^3 - 2$, and minimum distance $d = \lambda(\Gamma(R)) = p - 1$.

For the dual code $C_2(G)^\perp$, the code length is identical to $n$. By rank-nullity, its dimension is given by 
\begin{align*}
\dim\left(C_2(G)^\perp\right) &= |E(\Gamma(R))| - \dim\left(C_2(G)\right) \\
&= \frac{1}{2}(3p^4 - 4p^3 - p^2 + 2) - (p^3 - 2) \\
&= \frac{1}{2}(3p^4 - 6p^3 - p^2 + 6).
\end{align*}
Since the dual of an incidence code over $\mathbb{F}_2$ is the cycle code of $\Gamma(R)$, its minimum distance equals the length of a shortest cycle in $\Gamma(R)$. Because $\operatorname{girth}(\Gamma(R)) = 3$, the dual minimum distance is $d^\perp = 3$.
\end{proof}

\begin{Corollary}
The dual incidence code $C_2(G)^\perp$ has minimum distance $d^\perp = 3$, enabling it to guarantee $2$-error-detection and single-error-correction capabilities.
\end{Corollary}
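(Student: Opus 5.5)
The corollary follows directly from Theorem~\ref{thm:incidence-code} together with the standard relation between minimum distance and error-handling capability. My plan is to take $d^\perp = 3$ as already established: by Result~\ref{res:dankelmann}, the dual incidence code is the binary cycle space of $\Gamma(R)$. Its minimum weight is therefore $\operatorname{girth}(\Gamma(R))$, which equals $3$ by Theorem~\ref{thm:diam-girth}(ii). The triangle formed by two vertices of $S_{u^3}$ and one vertex of $S_u$ gives an explicit weight-$3$ codeword. Since $\Gamma(R)$ is simple, there are no cycles of length $1$ or $2$, so no nonzero codeword has weight below $3$.

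Next I will invoke the classical facts for any linear code of minimum distance $d$. It detects every error pattern of weight at most $d-1$, because a nonzero error of weight less than $d$ cannot map a codeword to another codeword. It corrects every error pattern of weight at most $t=\lfloor (d-1)/2\rfloor$, because Hamming balls of radius $t$ about distinct codewords are disjoint by the triangle inequality. With $d^\perp = 3$ this gives detection of up to $2$ errors and correction of $t=1$ error.

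There is no real obstacle here. The only point needing care is noting that the girth bound is attained, so $d^\perp$ is exactly $3$ and not larger. This is settled by the explicit triangle codeword. It also shows the claimed capabilities are sharp: the code cannot detect all weight-$3$ errors, and it cannot correct all double errors.
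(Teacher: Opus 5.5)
Your proposal is correct and follows the paper's route. The paper also takes $d^\perp = 3$ from Theorem~\ref{thm:incidence-code} (the cycle-space/girth characterization) and applies the standard bounds of detecting up to $d-1$ errors and correcting up to $\lfloor (d-1)/2\rfloor$ errors, citing Theorem 2.5.10 of \cite{Ling2004}. Your explicit triangle codeword and the sharpness remarks only spell out details of that same argument. You also rightly treat detection and correction as separate guarantees, since correcting one error while simultaneously detecting two would need $d \ge 4$.
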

\begin{proof}
  This follows from   Theorem \ref{thm:incidence-code} and Theorem 2.5.10 of  \cite{Ling2004}.
\end{proof}
\begin{Remark}
Comparing Theorem~\ref{thm:incidence-code} with the incidence code parameters for the lower nilpotency ring $\mathbb{F}_p + u\mathbb{F}_p + u^2\mathbb{F}_p$ ($u^3 = 0$) studied by Annamalai \cite{Annamalai2025} where the primal code parameters are $[\frac{1}{2}(2p^3 - 3p^2 - p + 2), \, p^2 - 2, \, p - 1]_2$ and the dual code parameters are $[\frac{1}{2}(2p^3 - 3p^2 - p + 2), \, \frac{1}{2}(2p^3 - 5p^2 - p + 6), \, 3]_2$. We observe that extending the nilpotency index to $u^4 = 0$ significantly increases both code length and dimension while preserving optimal single-error-correcting dual distance.
\end{Remark}

Let $V_1 = S_u \cup S_{u+u^2} \cup S_{u+u^3} \cup S_{u+u^2+u^3}$ be the maximal independent set and $V_3 = S_{u^3}$ be the universal dominating clique. Let $\Gamma_{\mathrm{bip}} = \Gamma[V_1, V_3]$ denote the bipartite subgraph induced by the edge cut $[V_1, V_3]$.

\begin{Theorem}
Let $\Gamma_{\mathrm{bip}} = (V_1 \cup V_3, E_{\mathrm{bip}})$ be the induced bipartite subgraph of $\Gamma(R)$ between $V_1$ and $V_3$. Then the following conditions hold: 
\begin{enumerate}
    \item The graph $\Gamma_{\mathrm{bip}}$ is isomorphic to the complete bipartite graph $K_{p^2(p-1),\, p-1}$.
    \item The binary incidence code $\mathcal{C}_2(G_{\mathrm{bip}})$ is a
    $\left[ p^2(p-1)^2,\, p^3 - p^2 + p - 2,\, p-1 \right]_2$
     linear code.
    \item The dual code $\mathcal{C}_2(G_{\mathrm{bip}})^\perp$ is a
    $\left[ p^2(p-1)^2,\, (p^2(p-1) - 1)(p - 2),\, 4 \right]_2$
    linear code possessing double-error detection and single-error correction capabilities.
\end{enumerate}
\end{Theorem}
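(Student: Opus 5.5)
We prove the three parts in order: first the structure of $\Gamma_{\mathrm{bip}}$, then its parameters, and finally Result~\ref{res:dankelmann} for the codes.

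\emph{Part 1.} This follows from the annihilation criteria of Section~3. Every $v \in V_1$ satisfies $\operatorname{Ann}_R(v) = u^3R$, so $v$ is adjacent to every vertex of $V_3 = S_{u^3}$. No two vertices of $V_1$ are adjacent, since their product has a nonzero $u^2$-coefficient. The edges inside $V_3$ are excluded by definition, because $\Gamma_{\mathrm{bip}}$ consists only of the edge cut $[V_1,V_3]$. Hence $\Gamma_{\mathrm{bip}} \cong K_{p^2(p-1),\,p-1}$. It has $|V_1|+|V_3| = p^2(p-1)+(p-1) = (p-1)(p^2+1)$ vertices and $p^2(p-1)^2$ edges. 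Its incidence matrix is exactly the first block column $\binom{D^{(p-1)}_{p^2(p-1)}}{E^{(p-1)}_{p^2(p-1)}}$ of $Q(\Gamma(R))$.

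\emph{Part 2.} The graph $K_{m,n}$ is connected for $m,n\ge1$. We need $p-1\ge 2$, which holds since $p$ is odd, so that the graph has the cycles required in Part 3. Result~\ref{res:dankelmann} then gives length $p^2(p-1)^2$ and dimension $(p-1)(p^2+1)-1 = p^3-p^2+p-2$. For the minimum distance we use $\lambda(K_{m,n}) = \min(m,n)$. Here $\min(m,n) = p-1$, since $p^2(p-1) > p-1$.

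We justify this briefly. The upper bound comes from the minimum degree: any $v\in V_1$ has degree $p-1$, so its $p-1$ edges form a cut. For the lower bound, there are $p-1$ vertices in $V_3$, and between any two vertices there are $p-1$ edge-disjoint paths. Concretely, two vertices of $V_1$ are joined by the paths $x\sim w\sim y$ for $w\in V_3$. The other vertex pairs need a similar construction, which we expect to be routine because the graph is complete bipartite. Menger's theorem then gives $\lambda \ge p-1$.

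\emph{Part 3.} The dual has dimension $|E| - |V| + 1 = p^2(p-1)^2 - (p-1)(p^2+1) + 1$. We check that this equals $(p^2(p-1)-1)(p-2)$. Set $m = p^2(p-1)$ and $n = p-1$. Then $mn - m - n + 1 = (m-1)(n-1)$, which is the claimed value. By Result~\ref{res:dankelmann}, the minimum distance is the girth. A bipartite graph has no odd cycles. Since $m\ge 2$ and $n\ge 2$, we can pick $x_1,x_2\in V_1$ and $w_1,w_2\in V_3$, and these give the $4$-cycle $x_1 w_1 x_2 w_2$. Hence the girth is $4$ and $d^\perp = 4$. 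Double-error detection and single-error correction follow from $d^\perp = 4 \ge 3$, by Theorem~2.5.10 of \cite{Ling2004} as in the earlier corollary.

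The main obstacle is the edge-connectivity lower bound in Part 2. It is standard for complete bipartite graphs, but the general Whitney argument of Theorem~\ref{thm:connectivity} does not transfer directly. If a self-contained proof is wanted, we will argue directly. Deleting fewer than $p-1$ edges leaves each vertex of $V_1$ attached to some vertex of $V_3$. Each pair $w,w'\in V_3$ shares more than $p-2$ common neighbours in $V_1$, so $w$ and $w'$ stay connected after the deletion. Together these two facts show that the graph remains connected.
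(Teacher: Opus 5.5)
Your proposal is correct and follows the paper's proof: you identify $\Gamma_{\mathrm{bip}}$ with $K_{p^2(p-1),\,p-1}$, apply Result~\ref{res:dankelmann} with $\lambda(K_{m,n})=\min(m,n)=p-1$, and obtain $d^\perp=4$ from bipartiteness together with an explicit $4$-cycle. Your additional details are all sound: the $u^2$-coefficient check that $V_1$ is independent, the factorization $(m-1)(n-1)$ for the dual dimension, and the direct cut argument for $\lambda\ge p-1$, which replaces your unfinished Menger sketch.
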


\begin{proof}
(1): By construction, every vertex in $V_1$ is adjacent to every vertex in $V_3$, and both $V_1$ and $V_3$ are treated as independent sets within the cut $[V_1, V_3]$. With $|V_1| = p^2(p-1)$ and $|V_3| = p-1$, $\Gamma_{\mathrm{bip}} \cong K_{p^2(p-1),\, p-1}$.
    
(2): The length of the incidence code is the number of edges |$E_{\mathrm{bip}}| = |V_1||V_3| = p^2(p-1) \cdot (p-1) = p^2(p-1)^2 $. Because $\Gamma_{\mathrm{bip}}$ is connected, the rank of the incidence matrix over $\mathbb{F}_2$ is $|V(\Gamma_{\mathrm{bip}})| - 1 = |V_1| + |V_3| - 1 = p^2(p-1) + (p-1) - 1 = p^3 - p^2 + p - 2$. The minimum degree of $\Gamma_{\mathrm{bip}}$ is $\delta(\Gamma_{\mathrm{bip}}) = |V_3| = p-1$. Since complete bipartite graphs have edge connectivity $\lambda(K_{m,n}) = \min(m,n)$, we have $\lambda = p-1$. By Dankelmann et al. \cite{Dankelmann2013}, the minimum distance is $\lambda = p-1$.

(3):  By rank-nullity, the dimension of the dual code is
\begin{align*}
  \dim(\mathcal{C}_2(G_{\mathrm{bip}})^\perp) &= |E_{\mathrm{bip}}| - \dim(\mathcal{C}_2(G_{\mathrm{bip}})) \\
  &= p^2(p-1)^2 - (p^2(p-1) + p - 2) \\
  &= (p^2(p-1) - 1)(p - 2).
\end{align*}
Because $\Gamma_{\mathrm{bip}}$ is bipartite, it contains no odd cycles, forcing $\operatorname{girth}(\Gamma_{\mathrm{bip}}) \ge 4$. For any distinct $u_1, u_2 \in V_1$ and $v_1, v_2 \in V_3$, the cycle $u_1 \sim v_1 \sim u_2 \sim v_2 \sim u_1$ has length $4$. Thus, $\operatorname{girth}(\Gamma_{\mathrm{bip}}) = 4$. Since the dual of an incidence code over $\mathbb{F}_2$ is the cycle code, its minimum distance equals the girth, that is,  $d(C^\perp) = 4$.
\end{proof}

\subsection{$p$-ary Matrix Codes, Automorphisms, and LCD Properties}
This subsection determines the parameters of the $\mathbb{F}_p$-linear codes generated by the adjacency and Laplacian matrices of $\Gamma(R)$ and their duals. We explicitly evaluate the weight enumerator and LCD property of the Laplacian code, and characterize the automorphism groups of the graph and the resulting codes.

Recall from Section 3 that the vertex set $Z^*(R)$ of order $|V(\Gamma(R))| = p^3 - 1$ splits into seven $u$-adic partition classes: $S_u, S_{u^2}, S_{u^3}, S_{u+u^2}, S_{u+u^3}, S_{u^2+u^3},$ and $S_{u+u^2+u^3}$. 

Because the subsets $S_u, S_{u+u^2}, S_{u+u^3},$ and $S_{u+u^2+u^3}$ exhibit identical neighborhood profiles (each vertex in these sets is adjacent solely to every vertex in $S_{u^3}$ and contains no internal edges), they can be merged into a single independent block $V_1$. While $V_1$ and $V_3$ are defined as in Section~4.1, we define the remaining partition blocks $V_2$ and $V_4$ as follows:
\begin{enumerate}

    \item $V_2 = S_{u^2}$, inducing a complete subgraph $K_{p-1}$ of size $n_2 = p-1$.
    
    \item $V_4 = S_{u^2+u^3}$, inducing a complete subgraph $K_{(p-1)^2}$ of size $n_4 = (p-1)^2$.
\end{enumerate}

Ordering the vertices according to $(V_1, V_2, V_3, V_4)$, the adjacency matrix $A(\Gamma(R))$ takes the compact block matrix form:
\begin{equation*}
A(\Gamma(R)) = \begin{pmatrix}
0_{n_1} & 0_{n_1 \times n_2} & J_{n_1 \times n_3} & 0_{n_1 \times n_4} \\
0_{n_2 \times n_1} & J_{n_2} - I_{n_2} & J_{n_2 \times n_3} & J_{n_2 \times n_4} \\
J_{n_3 \times n_1} & J_{n_3 \times n_2} & J_{n_3} - I_{n_3} & J_{n_3 \times n_4} \\
0_{n_4 \times n_1} & J_{n_4 \times n_2} & J_{n_4 \times n_3} & J_{n_4} - I_{n_4}
\end{pmatrix},
\end{equation*}
where $J_k$ denotes the $k \times k$ all-ones matrix, $J_{m \times n}$ denotes the $m \times n$ all-ones matrix, $0_k$ denotes the $k \times k$ zero matrix, and $I_k$ denotes the $k \times k$ identity matrix.

We now consider the linear code $C_{\mathbb{F}_p}(A(\Gamma(R))) \subseteq \mathbb{F}_p^{p^3 - 1}$ generated over the base field $\mathbb{F}_p$ by the rows of the block adjacency matrix $A(\Gamma(R))$. 

To examine how the algebraic equitable partition and vertex degrees govern code parameters over the base field $\mathbb{F}_p$, we evaluate the generator rank, minimum distance, and dual structures of the linear codes arising from the adjacency and Laplacian matrices of $\Gamma(R)$.

\begin{Theorem}\label{parameters_code_adjmatrix}
Let $A(\Gamma(R))$ be the adjacency matrix of the zero-divisor graph $\Gamma(R)$ of the ring $R$. Then following statements hold: 
\begin{enumerate}
    \item The linear code $\mathcal{C}_{\mathbb{F}_{p}}(A(\Gamma(R)))$ generated by the rows of $A(\Gamma(R))$ over $\mathbb{F}_{p}$ is a
    $[p^{3}-1,\, p^{2},\, 2]_{p}$
    linear code.
   \item The dual code $\mathcal{C}_{\mathbb{F}_{p}}(A(\Gamma(R)))^{\perp}$ is a
$[p^{3}-1,\, p^{3}-p^{2}-1,\, 2]_{p}$  linear code consisting precisely of all vectors supported on $V_{1}=S_{u}\cup S_{u+u^{2}}\cup S_{u+u^{3}}\cup S_{u+u^{2}+u^{3}}$ whose components sum to zero modulo $p$.
\end{enumerate}
\end{Theorem}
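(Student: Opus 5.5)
The plan is to compute everything from the block form of $A(\Gamma(R))$ under the ordering $(V_1,V_2,V_3,V_4)$, with $n_1=p^2(p-1)$, $n_2=n_3=p-1$, $n_4=(p-1)^2$. Write $K=V_2\cup V_3\cup V_4$, so $|K|=p^2-1$.

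\emph{Step 1: the row space.} Every row indexed by $V_1$ equals the indicator vector $\mathbf{1}_{V_3}$. Every row indexed by $v\in K$ has the form $\mathbf{1}_K-e_v$ plus an extra $\mathbf{1}_{V_1}$ when $v\in V_3$. The $K\times K$ block is $J_{p^2-1}-I_{p^2-1}$. Its eigenvalues are $p^2-2$ (once) and $-1$ (with multiplicity $p^2-2$), so its determinant is $\pm(p^2-2)\equiv \mp 2 \pmod p$. This is nonzero because $p$ is odd. Hence the $p^2-1$ rows indexed by $K$ are independent.

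\emph{Step 2: the dimension.} I will show that $\mathbf{1}_{V_3}$ (the common $V_1$-row) is not in the span of the $K$-rows. The reason is that any combination of $K$-rows equal to $\mathbf{1}_{V_3}$ on $K$ is forced, by invertibility, to be one specific combination, and I will check that this combination has nonzero $V_1$-part. This gives $\operatorname{rank}A(\Gamma(R))=p^2$, so $\dim \mathcal{C}=p^2$ and $\dim\mathcal{C}^\perp=p^3-1-p^2$.

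\emph{Step 3: the dual code.} A vector $y$ lies in $\mathcal{C}^\perp$ exactly when $Ay=0$, since $A$ is symmetric. The $K$-rows together with invertibility of $J-I$ on $K$ should force $y|_K$ to be determined by $\sum_{V_1}y$. The $V_1$-rows then force $\sum_{V_3}y=0$, and I will combine the two conditions to conclude $y|_K=0$ and $\sum_{V_1}y=0$. This subspace has dimension $n_1-1=p^3-p^2-1$, matching the count from Step 2, so it is all of $\mathcal{C}^\perp$. Its minimum distance is $2$, attained by $e_a-e_b$ with $a,b\in V_1$; weight $1$ is impossible because of the zero-sum condition.

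\emph{Step 4: the minimum distance of $\mathcal{C}$.} For the upper bound, the difference of the rows of two distinct $v,w\in V_2$ is $e_w-e_v$, a codeword of weight $2$, so $d\le 2$. For the lower bound one must exclude weight-one codewords. Here is the main obstacle, and I expect it to break the claim: Step 1 already shows that the restriction of $\mathcal{C}$ to $K$ is all of $\mathbb{F}_p^{K}$. I will first try to exclude $e_i\in\mathcal{C}$ through the dual. A vector $e_i$ lies in $\mathcal{C}$ if and only if it is orthogonal to every vector in $\mathcal{C}^\perp$ described in Step 3. By that description this holds for every $i\in K$, which would give $d=1$, not $2$. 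So the claimed lower bound $d\ge 2$ can only survive if Step 3's description of $\mathcal{C}^\perp$ is wrong. Before asserting $d=2$, the decisive task is therefore to recheck Step 3 carefully: confirm whether $\mathcal{C}^\perp$ really vanishes on $K$, or whether it contains vectors that are nonzero on $K$. Whether the statement's value $d=2$ holds rests entirely on this one verification.
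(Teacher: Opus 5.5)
\textbf{The statement is false, and your suspicion is correct.} The recheck of Step 3 that you defer comes out in favour of your description of $\mathcal{C}^\perp$, not in favour of $d=2$.

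Carry out Step 3 with $S=\sum_{K}y$, $\sigma_1=\sum_{V_1}y$ and $\sigma_3=\sum_{V_3}y$. Use $|V_2\cup V_4|=p(p-1)\equiv 0$ and $|V_3|=p-1\equiv -1$.
\begin{itemize}
\item The rows indexed by $V_2\cup V_4$ give $y_v=S$.
\item The rows indexed by $V_3$ give $y_v=S+\sigma_1$.
\item Summing over $K$ gives $2S=-\sigma_1$.
\item The $V_1$-rows give $0=\sigma_3\equiv-(S+\sigma_1)=-\sigma_1/2$.
\end{itemize}
Hence $\sigma_1=0$ and $y|_K=0$. So $\mathcal{C}^\perp$ is exactly the zero-sum space on $V_1$, as in part (2), and part (2) is true.

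Consequently $\mathcal{C}=\operatorname{span}\{\mathbf{1}_{V_1}\}\oplus\mathbb{F}_p^{K}$, which contains $e_v$ for every $v\in K$. You can exhibit such a word directly from rows. For any $x\in V_1$ and $v\in V_2\cup V_4$,
\[
r_x-r_v-\sum_{w\in V_2\cup V_4}r_w=\mathbf{1}_{V_3}-(\mathbf{1}_K-e_v)+\mathbf{1}_{V_2\cup V_4}=e_v ,
\]
since $\sum_{w\in V_2\cup V_4}r_w=p(p-1)\mathbf{1}_K-\mathbf{1}_{V_2\cup V_4}\equiv-\mathbf{1}_{V_2\cup V_4}$. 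For $p=3$ this puts $e_{u^2}$ in the $[26,9]_3$ code.

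So $\mathcal{C}_{\mathbb{F}_p}(A(\Gamma(R)))$ is a $[p^3-1,\,p^2,\,1]_p$ code. The claim $d=2$ in part (1) is false and is in fact contradicted by part (2) of the same statement, exactly as you observed. The paper's proof breaks at the same spot: after exhibiting the weight-$2$ word $e_v-e_u$, it simply asserts ``no codeword of weight 1 exists'' with no argument. Your proposal therefore cannot be completed into a proof of the statement as written. The correct version keeps part (2) and replaces $d=2$ by $d=1$ in part (1).

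The rest of your plan is sound.
\begin{itemize}
\item In Step 2, the unique $y$ on $K$ with $(J-I)y=\mathbf{1}_{V_3}$ is $y=\tfrac12$ on $V_2\cup V_4$ and $y=-\tfrac12$ on $V_3$. Its $V_1$-part is $(\sum_{V_3}y)\mathbf{1}_{V_1}=\tfrac12\mathbf{1}_{V_1}\neq 0$, so the rank is indeed $p^2$.
\item This is a cleaner justification than the paper's tally $1+(p-2)+(p-2)+((p-1)^2-1)+2$, which actually sums to $p^2-1$. The paper's stated value $p^2$ is right, but its count does not support it.
\item Your argument that $d^\perp=2$ is fine.
\end{itemize}
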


\begin{proof}
Consider the 4-block equitable partition $V(\Gamma(R))=V_{1}\cup V_{2}\cup V_{3}\cup V_{4}$ with block cardinalities $n_{1}=p^{2}(p-1)=p^{3}-p^{2}$, $n_{2}=p-1$, $n_{3}=p-1$, and $n_{4}=(p-1)^{2}$.

(1): The length of the code $\mathcal{C}_{\mathbb{F}_{p}}(A(\Gamma(R)))$ equals the number of vertices $n=|V(\Gamma(R))|=p^{3}-1$.

To evaluate the dimension $k=\mathrm{rank}_{\mathbb{F}_{p}}(A(\Gamma(R)))$, observe that over $\mathbb{F}_{p}$, all-ones blocks $J_{m\times n}$ have rank 1. Block $V_{1}$ contributes 1 independent row vector $\mathbf{1}_{V_{3}}^{T}$. The diagonal blocks $J_{n_{i}}-I_{n_{i}}\equiv -I_{n_{i}}\pmod p$ on the zero-sum subspace contribute $(n_{2}-1)+(n_{3}-1)+(n_{4}-1)$ linearly independent rows. Accounting for the block-constant vectors, the row rank over $\mathbb{F}_{p}$ is
$k = \mathrm{rank}_{\mathbb{F}_{p}}(A(\Gamma(R))) = 1+(p-2)+(p-2)+((p-1)^{2}-1)+2 = p^{2}$.

For the minimum distance, consider two distinct vertices $u, v \in V_{2}$. Because $u$ and $v$ share identical adjacency neighborhoods across $V_{1}$, $V_{3}$, and $V_{4}$, their corresponding rows $r_{u}$ and $r_{v}$ differ only at coordinates $u$ and $v$ within $V_{2}$. Thus, the linear combination $c = r_{u} - r_{v} = e_{v} - e_{u}$ is a non-zero codeword of Hamming weight $\mathrm{wt}(c) = 2$. Since no codeword of weight 1 exists, the minimum distance is $d = 2$.

(2): The length of the dual code $\mathcal{C}_{\mathbb{F}_{p}}(A(\Gamma(R)))^{\perp}$ is $p^{3}-1$. By the rank-nullity theorem, its dimension is $\dim(\mathcal{C}_{\mathbb{F}_{p}}(A(\Gamma(R)))^{\perp}) = n - k = (p^{3}-1) - p^{2} = p^{3}-p^{2}-1$.

Clearly, an arbitrary vector $x = (x^{(1)}, x^{(2)}, x^{(3)}, x^{(4)})^{T} \in \mathbb{F}_{p}^{p^{3}-1}$ is an element of $\mathcal{C}_{\mathbb{F}_{p}}(A(\Gamma(R)))^{\perp}$ precisely when it satisfies the system $A(\Gamma(R))x \equiv 0 \pmod p$. Consider the subspace $W\subseteq\mathbb{F}_{p}^{p^{3}-1}$ of vectors supported entirely on $V_{1}$ ($x^{(2)}=0, x^{(3)}=0, x^{(4)}=0$) satisfying $\sum_{i\in V_{1}}x_{i}^{(1)}\equiv 0\pmod p$. For any $x\in W$:
\begin{equation*}
A(\Gamma(R))x = \begin{pmatrix} 0 \\ 0 \\ \left(\sum_{i\in V_{1}}x_{i}^{(1)}\right)\mathbf{1}_{n_{3}} \\ 0 \end{pmatrix} \equiv 0\pmod p.
\end{equation*}
Thus $W\subseteq\mathcal{C}_{\mathbb{F}_{p}}(A(\Gamma(R)))^{\perp}$. Since $\dim(W) = n_{1}-1 = p^{3}-p^{2}-1 = \dim(\mathcal{C}_{\mathbb{F}_{p}}(A(\Gamma(R)))^{\perp})$, we have $\mathcal{C}_{\mathbb{F}_{p}}(A(\Gamma(R)))^{\perp}=W$.

A single non-zero entry in $V_{1}$ produces a non-zero sum, so no codeword of weight 1 exists. However, setting $x_{i}=1$ and $x_{j}=-1$ for two distinct indices $i, j\in V_{1}$ yields a valid dual codeword of Hamming weight 2. Therefore, $d^{\perp}=2$.
\end{proof}

\begin{Theorem}\label{thm:adj-hull-dimension}
Let $A(\Gamma(R))$ be the block adjacency matrix of $\Gamma(R)$ over $\mathbb{F}_p$, and let $\mathcal{C}_{\mathbb{F}_p}(A(\Gamma(R)))$ be its associated $[p^3-1, p^2, 2]_p$ linear code. The Euclidean hull of the adjacency code is strictly $1$-dimensional and is given explicitly by:
$
\operatorname{Hull}\left(\mathcal{C}_{\mathbb{F}_p}(A(\Gamma(R)))\right) = \operatorname{span}_{\mathbb{F}_p}\left\{ (\mathbf{1}_{n_1}, \mathbf{0}_{n_2}, \mathbf{0}_{n_3}, \mathbf{0}_{n_4})^T \right\}.
$
In particular, $\dim_{\mathbb{F}_p}\left(\operatorname{Hull}\left(\mathcal{C}_{\mathbb{F}_p}(A(\Gamma(R)))\right)\right) = 1$, and therefore $\mathcal{C}_{\mathbb{F}_p}(A(\Gamma(R)))$ is an almost-LCD code that is never an LCD code over $\mathbb{F}_p$ for any odd prime $p$.
\end{Theorem}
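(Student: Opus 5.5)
The plan is to compute the hull as $\mathcal{C}\cap\mathcal{C}^\perp$ for $\mathcal{C}=\mathcal{C}_{\mathbb{F}_p}(A(\Gamma(R)))$, using the explicit description of $\mathcal{C}^\perp$ from Theorem~\ref{parameters_code_adjmatrix}(2). That theorem gives $\mathcal{C}^\perp = W$, the space of vectors supported on $V_1$ whose coordinates sum to zero modulo $p$. Since $\mathcal{C}$ is a subspace of a finite-dimensional space with a nondegenerate form, $\mathcal{C}=(\mathcal{C}^\perp)^\perp = W^\perp$. Hence
\[
\operatorname{Hull}(\mathcal{C}) = W\cap W^\perp .
\]
This avoids manipulating the rows of $A(\Gamma(R))$ directly.

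First I will describe $W^\perp$. The space $W$ is spanned by the vectors $e_i-e_j$ with $i,j\in V_1$. A vector $y=(y^{(1)},y^{(2)},y^{(3)},y^{(4)})$ is orthogonal to all of these exactly when $y^{(1)}$ is constant on $V_1$, with no condition on $y^{(2)},y^{(3)},y^{(4)}$. So
\[
W^\perp=\operatorname{span}\{\mathbf{1}_{V_1}\}\oplus \mathbb{F}_p^{V_2\cup V_3\cup V_4}.
\]
As a consistency check, this has dimension $1+n_2+n_3+n_4 = 1+(p-1)+(p-1)+(p-1)^2=p^2$, which matches $\dim\mathcal{C}=p^2$ from Theorem~\ref{parameters_code_adjmatrix}(1).

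Next I intersect. A vector in $W\cap W^\perp$ vanishes on $V_2\cup V_3\cup V_4$ because it lies in $W$. Because it lies in $W^\perp$, it equals $c\,\mathbf{1}_{V_1}$ for some $c\in\mathbb{F}_p$. Such a vector lies in $W$ iff $c\,n_1\equiv 0\pmod p$. Since $n_1=p^2(p-1)\equiv 0\pmod p$, this holds for every $c$. Therefore $\operatorname{Hull}(\mathcal{C})=\operatorname{span}_{\mathbb{F}_p}\{(\mathbf{1}_{n_1},\mathbf{0}_{n_2},\mathbf{0}_{n_3},\mathbf{0}_{n_4})^T\}$. This space has dimension exactly $1$, so $\mathcal{C}$ is not LCD for any odd prime $p$, and it is almost-LCD in the sense of a $1$-dimensional hull.

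The main obstacle is that the whole argument rests on the identification $\mathcal{C}^\perp=W$. That identification in turn depends on the rank count $\operatorname{rank}_{\mathbb{F}_p}A(\Gamma(R))=p^2$, which the earlier proof only sketches. To make the hull result self-contained, I would add an independent check that $\ker A = W$. From the block form, $Ax=0$ forces the following:
\begin{itemize}
\item the $V_1$-rows give $\sum_{V_3}x=0$;
\item the $V_2$-, $V_3$- and $V_4$-rows give equations of the form $\sigma-x_v=0$, where $\sigma$ is a block sum;
\item hence $x$ is constant on each of $V_2,V_3,V_4$.
\end{itemize}
Substituting these constants back into the equations, and using $n_2\equiv n_3\equiv -1$ and $n_4\equiv 1\pmod p$, should force the constants to be zero and leave only $\sum_{V_1}x=0$. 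Since $A$ is symmetric, $\mathcal{C}=(\ker A)^\perp$, and the argument above then goes through.
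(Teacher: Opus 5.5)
Your proof is correct, and it takes a different route from the paper's.

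\textbf{How the paper argues.} The paper works directly with row combinations, and each inclusion is handled separately.
\begin{itemize}
\item For $\operatorname{Hull}\subseteq\operatorname{span}\{\mathbf{1}_{V_1}\}$, it notes that the $V_1$-columns of $A(\Gamma(R))$ are nonzero only in the $V_3$-rows. Hence every codeword restricted to $V_1$ equals $\beta\mathbf{1}_{n_1}$. It combines this with the support condition from Theorem~\ref{parameters_code_adjmatrix}(2).
\item For the reverse inclusion, it exhibits explicit coefficients: $\beta$ on $V_2\cup V_4$, $-\beta$ on $V_3$, and total $-2\beta$ on $V_1$. The resulting row combination equals $\beta\mathbf{1}_{V_1}$.
\end{itemize}

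\textbf{How you argue, and what it buys.} You use $\mathcal{C}=(\mathcal{C}^\perp)^\perp=W^\perp$. This gives a complete description of the adjacency code: all vectors constant on $V_1$ and arbitrary on $V_2\cup V_3\cup V_4$. Both inclusions then collapse into a one-line intersection, and $\dim\mathcal{C}=p^2$ comes out as a consistency check rather than an input. The cost is that you need the full equality $\mathcal{C}^\perp=W$, not just one inclusion. However, the paper's forward inclusion also relies on $\mathcal{C}^\perp\subseteq W$, so both proofs ultimately rest on the kernel computation.

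\textbf{Your independent check of the kernel.} You were right to verify $\ker A=W$ independently. As printed, the rank tally in the proof of Theorem~\ref{parameters_code_adjmatrix}, namely $1+(p-2)+(p-2)+((p-1)^2-1)+2$, evaluates to $p^2-1$, so that dimension count does not stand as written. Your sketch does close. Write $s_i=\sum_{v\in V_i}x_v$, $x\equiv a$ on $V_2\cup V_4$, and $x\equiv b$ on $V_3$.
\begin{itemize}
\item The $V_1$-rows give $s_3=n_3b=-b=0$.
\item Then $a=s_2+s_3+s_4=-a+0+a=0$.
\item The $V_3$-rows give $b=s_1+a$, which forces $s_1=0$.
\end{itemize}

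\textbf{A minor point.} You do not need symmetry of $A$ for $\mathcal{C}=(\ker A)^\perp$. The row space of any matrix is the orthogonal complement of its right kernel.
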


\begin{proof}
By definition, $\operatorname{Hull}(\mathcal{C}_{\mathbb{F}_p}(A(\Gamma(R)))) = \mathcal{C}_{\mathbb{F}_p}(A(\Gamma(R))) \cap \mathcal{C}_{\mathbb{F}_p}(A(\Gamma(R)))^\perp$.

First, let $c \in \operatorname{Hull}(\mathcal{C}_{\mathbb{F}_p}(A(\Gamma(R))))$. By Theorem~\ref{parameters_code_adjmatrix}(2), any vector in the dual code $\mathcal{C}_{\mathbb{F}_p}(A(\Gamma(R)))^\perp$ is supported exclusively on $V_1$:
\[
c = (c^{(1)}, \mathbf{0}_{n_2}, \mathbf{0}_{n_3}, \mathbf{0}_{n_4})^T \in \mathbb{F}_p^{p^3-1} \quad \text{with} \quad \sum_{v \in V_1} c_v^{(1)} \equiv 0 \pmod p.
\]
Because $c \in \mathcal{C}_{\mathbb{F}_p}(A(\Gamma(R))) = \operatorname{rowsp}_{\mathbb{F}_p}(A(\Gamma(R)))$, $c$ can be written as a linear combination of rows:
\[
c = \sum_{i=1}^4 \sum_{u \in V_i} y_u r_u,
\]
where $r_u$ denotes the row of $A(\Gamma(R))$ indexed by vertex $u$. In the block structure of $A(\Gamma(R))$ across $(V_1, V_2, V_3, V_4)$, the columns corresponding to $V_1$ have non-zero entries exclusively in the block of rows indexed by $V_3$, where each row restricted to $V_1$ is the all-ones row $\mathbf{1}_{n_1}$. Therefore:
\[
c^{(1)} = \beta \mathbf{1}_{n_1}, \quad \text{where } \beta = \sum_{u \in V_3} y_u \in \mathbb{F}_p.
\]
Because $n_1 = p^2(p-1) \equiv 0 \pmod p$, the dual zero-sum condition $\sum_{v \in V_1} c_v^{(1)} = \beta n_1 \equiv 0 \pmod p$ is satisfied identically for any choice of scalar $\beta \in \mathbb{F}_p$. Thus, every vector in the hull is necessarily of the form $\beta (\mathbf{1}_{n_1}, \mathbf{0}, \mathbf{0}, \mathbf{0})^T$, which immediately establishes that $\dim_{\mathbb{F}_p}(\operatorname{Hull}(\mathcal{C}_{\mathbb{F}_p}(A(\Gamma(R))))) \le 1$.

It remains to prove that non-zero codewords of this form belong to the row space of $A(\Gamma(R))$. To ensure $c_v = 0$ for all coordinates outside $V_1$, let $Y_i = \sum_{u \in V_i} y_u$ denote the sum of coefficients in block $V_i$.
\begin{enumerate}
    \item For any coordinate $v \in V_2$, the condition $c_v = 0$ requires $(Y_2 - y_v) + \beta + Y_4 = 0$, forcing $y_v = k_2 = Y_2 + \beta + Y_4$ to be constant for all $v \in V_2$. Since $n_2 = p - 1 \equiv -1 \pmod p$, we have $Y_2 = n_2 k_2 \equiv -k_2 \pmod p$.
    \item For any coordinate $w \in V_4$, the condition $c_w = 0$ requires $Y_2 + \beta + (Y_4 - y_w) = 0$, forcing $y_w = k_4 = Y_2 + \beta + Y_4$ to be constant for all $w \in V_4$. Thus $k_4 = k_2$. Since $n_4 = (p-1)^2 \equiv 1 \pmod p$, we have $Y_4 = n_4 k_4 \equiv k_2 \pmod p$, which yields $Y_2 + Y_4 \equiv -k_2 + k_2 = 0 \pmod p$. Substituting this into $k_2 = Y_2 + \beta + Y_4$ gives $k_2 = \beta$.
    \item For any coordinate $z \in V_3$, the condition $c_z = 0$ requires $Y_1 + Y_2 + (\beta - y_z) + Y_4 = 0$. Since $Y_2 + Y_4 \equiv 0 \pmod p$, this reduces to $y_z = Y_1 + \beta$, so $y_z$ is constant across $V_3$ with $y_z = k_3 = -\beta$. Hence, $Y_1 + 2\beta \equiv 0 \pmod p$.
\end{enumerate}
Choosing any scalar $\beta \in \mathbb{F}_p^*$, setting $y_v = \beta$ for all $v \in V_2 \cup V_4$, setting $y_z = -\beta$ for all $z \in V_3$, and selecting the coefficients on $V_1$ such that $\sum_{u \in V_1} y_u \equiv -2\beta \pmod p$ (which is always possible since $n_1 \ge 1$) produces a row combination that vanishes identically on $V_2 \cup V_3 \cup V_4$ while evaluating to $\beta \mathbf{1}_{n_1} \ne \mathbf{0}$ on $V_1$. 

Therefore, $(\mathbf{1}_{n_1}, \mathbf{0}, \mathbf{0}, \mathbf{0})^T \in \operatorname{Hull}(\mathcal{C}_{\mathbb{F}_p}(A(\Gamma(R))))$, which completes the proof.
\end{proof}


Finally, we derive the parameters of the linear code $C_{\mathbb{F}_p}(L(\Gamma(R))) \subseteq \mathbb{F}_p^{p^3 - 1}$ generated over $\mathbb{F}_p$ by the rows of the Laplacian matrix $L(\Gamma(R)) = D(\Gamma(R)) - A(\Gamma(R))$.

\begin{Theorem}\label{parameters_code_Laplacianmatrix}
Let $L(\Gamma(R))$ be the Laplacian matrix of the zero-divisor graph $\Gamma(R)$.
\begin{enumerate}
    \item The linear code $\mathcal{C}_{\mathbb{F}_{p}}(L(\Gamma(R)))$ generated by the rows of $L(\Gamma(R))$ over $\mathbb{F}_{p}$ is a $[p^{3}-1,\, p^{3}-2,\, 2]_{p}$  linear code.
    \item Its dual code $\mathcal{C}_{\mathbb{F}_{p}}(L(\Gamma(R)))^{\perp}$ is a
    $[p^{3}-1,\, 1,\, p^{3}-1]_{p}$ repetition code spanned by the all-ones vector $\mathbf{1}_{p^{3}-1}$.
\end{enumerate}

\end{Theorem}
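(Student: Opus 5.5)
The plan is to compute the null space of $L(\Gamma(R))$ over $\mathbb{F}_p$ directly, using the block partition $(V_1,V_2,V_3,V_4)$ of Section~4.2. Since $L(\Gamma(R))$ is symmetric, the dual code $\mathcal{C}_{\mathbb{F}_p}(L(\Gamma(R)))^\perp$ is exactly $\{x : L(\Gamma(R))x = 0\}$. So it suffices to show that this kernel is $\operatorname{span}\{\mathbf{1}_{p^3-1}\}$. Part (2) and the dimension $p^3-2$ in part (1) then follow by rank--nullity.

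\textbf{Degrees modulo $p$.} First record the vertex degrees from the adjacency relations of Section~3 and reduce them mod $p$:
\begin{itemize}
\item $\deg v = p-1\equiv -1$ for $v\in V_1$;
\item $\deg v = (p-2)+(p-1)+(p-1)^2 = p^2-2\equiv -2$ for $v\in V_2$ or $v\in V_4$;
\item $\deg v = p^3-2\equiv -2$ for $v\in V_3$.
\end{itemize}
Write $X_i=\sum_{v\in V_i}x_v$ and read off the rows of $Lx=0$:
\begin{itemize}
\item a row in $V_1$ gives $-x_v - X_3=0$;
\item a row in $V_2$ gives $-2x_v-(X_2-x_v)-X_3-X_4=0$, i.e.\ $x_v=-(X_2+X_3+X_4)$;
\item a row in $V_4$ gives the same equation, with the same right-hand side;
\item a row in $V_3$ gives $x_v=-(X_1+X_2+X_3+X_4)$.
\end{itemize}
Hence $x$ takes a constant value $a$ on $V_1$, a common constant $b$ on $V_2\cup V_4$, and a constant $c$ on $V_3$.

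\textbf{Solving for the constants.} Use the block sizes mod $p$: $n_1=p^2(p-1)\equiv 0$, $n_2\equiv -1$, $n_3\equiv -1$, $n_4\equiv 1$. These give $X_1=0$, $X_2=-b$, $X_3=-c$ and $X_4=b$. Substituting back:
\begin{itemize}
\item the $V_1$ equation gives $a=-X_3=c$;
\item the $V_2$/$V_4$ equation gives $b=-(-b-c+b)=c$;
\item the $V_3$ equation reduces to $c=c$, which holds identically.
\end{itemize}
Thus every kernel vector is $c\,\mathbf{1}$. Conversely, $\mathbf{1}$ lies in the kernel because the row sums of a Laplacian vanish. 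So the kernel is one-dimensional, the rank is $p^3-2$, and the dual is the $[p^3-1,1,p^3-1]_p$ repetition code. This bookkeeping, especially getting the residues of the degrees and block sizes right, is the main point of care. The key facts are that $n_1\equiv 0$ kills $X_1$, and that $n_2+n_4\equiv 0$ makes the $V_2$ and $V_4$ contributions cancel.

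\textbf{Minimum distance.} Every codeword is orthogonal to $\mathbf{1}$, so its coordinates sum to zero and no codeword has weight $1$. In fact, by dimension count the code is exactly the zero-sum subspace. For weight $2$, take distinct $u,v\in V_2$. Their Laplacian rows agree off $\{u,v\}$, so
\[
r_u-r_v=(\deg u+1)(e_u-e_v)=(p^2-1)(e_u-e_v).
\]
Since $p^2-1\equiv -1\not\equiv 0$, this is a nonzero codeword of weight $2$. Therefore $d=2$, which completes part (1).
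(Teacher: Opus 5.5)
Your proof is correct. It is also more complete than what the paper gives: the paper omits the argument, saying only that it follows as in Theorem~\ref{parameters_code_adjmatrix}. There, the rank is obtained first by an informal count of independent block rows, and the dual is then identified by exhibiting a subspace of matching dimension.

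You reverse that order. You solve $L(\Gamma(R))x=0$ directly over $\mathbb{F}_p$ using the block-constant reduction and the residues $\deg\equiv -1$ on $V_1$, $\deg\equiv -2$ elsewhere, $n_1\equiv 0$, and $n_2+n_4\equiv 0 \pmod p$. This pins down the dual as $\operatorname{span}\{\mathbf{1}_{p^3-1}\}$ with both inclusions proved, and the dimension $p^3-2$ then follows by rank--nullity.

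What this buys is a genuine verification of the one step that is not automatic. Over $\mathbb{F}_p$, connectivity alone does not force $\operatorname{rank} L=|V|-1$: the rank drops exactly when $p$ divides the number of spanning trees (e.g.\ $L(C_3)\equiv -J$ has rank $1$ over $\mathbb{F}_3$). So an appeal to connectedness, as the paper later makes in the proof of Theorem~\ref{thm:subgraph-lcd-criterion}, would not suffice, whereas your modular computation does.

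Your observation that the code is then the full zero-sum subspace already gives $d=2$. The explicit witness $r_u-r_v=(p^2-1)(e_u-e_v)$ is a nice concrete check; it uses $|V_2|=p-1\ge 2$, which holds since $p\ge 3$. A minor point: the symmetry of $L$ is not needed to identify the dual of the row space with $\{x: Lx=0\}$.
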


\begin{proof}
Employing arguments similar to those used in the proof of Theorem \ref{parameters_code_adjmatrix}, the result follows immediately. We omit the details here. 

\end{proof}


By exploiting the one-dimensional repetition structure of the dual Laplacian code, we obtain the complete polynomial weight distribution of $\mathcal{C}_{\mathbb{F}_p}(L(\Gamma(R)))$ via the MacWilliams identity and prove that it forms a linear complementary dual (LCD) code.

\begin{Theorem}\label{thm:weight-enum-laplacian}
Let $\mathcal{C}_{\mathbb{F}_p}(L(\Gamma(R)))$ be the $[p^3-1, p^3-2, 2]_p$ linear code generated over $\mathbb{F}_p$ by the rows of $L(\Gamma(R))$. Then the polynomial weight enumerator of $\mathcal{C}_{\mathbb{F}_p}(L(\Gamma(R)))$ is given by 
\begin{equation*}
W_{\mathcal{C}}(z) = \frac{1}{p}\left[ \left(1 + (p-1)z\right)^{p^3-1} + (p-1)(1 - z)^{p^3-1} \right],
\end{equation*}
and for each $w \in \{0, 1, \dots, p^3-1\}$, the exact number of codewords of Hamming weight $w$ is 
\begin{equation*}
A_w = \frac{1}{p} \binom{p^3-1}{w} \left[ (p-1)^w + (p-1)(-1)^w \right].
\end{equation*}
\end{Theorem}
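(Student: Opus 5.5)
The plan is to apply the MacWilliams identity from Section~2, with the roles of the two codes chosen so that the known side is the tiny dual. By Theorem~\ref{parameters_code_Laplacianmatrix}(2), $\mathcal{C}_{\mathbb{F}_p}(L(\Gamma(R)))^\perp$ is the $[p^3-1,1,p^3-1]_p$ repetition code spanned by $\mathbf{1}_{p^3-1}$. Its weight enumerator is immediate: the zero word has weight $0$, and each of the $p-1$ non-zero multiples $\alpha\mathbf{1}$ has full weight $n=p^3-1$. Hence
\[
W_{\mathcal{C}^\perp}(z)=1+(p-1)z^{n}, \qquad |\mathcal{C}^\perp|=p .
\]
Since $(\mathcal{C}^\perp)^\perp=\mathcal{C}$, I will use the identity in the form $W_{\mathcal{C}}(z)=\frac{1}{|\mathcal{C}^\perp|}(1+(p-1)z)^n W_{\mathcal{C}^\perp}\!\left(\frac{1-z}{1+(p-1)z}\right)$, which is exactly the version stated in the preliminaries with $q=p$.

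Substituting gives
\[
W_{\mathcal{C}}(z)=\frac1p(1+(p-1)z)^n\left[1+(p-1)\left(\frac{1-z}{1+(p-1)z}\right)^{n}\right].
\]
The factor $(1+(p-1)z)^n$ cancels the denominator, which yields the closed form $\frac1p[(1+(p-1)z)^n+(p-1)(1-z)^n]$ with $n=p^3-1$. To obtain $A_w$, I expand both powers by the binomial theorem and read off the coefficient of $z^w$:
\[
\binom{n}{w}(p-1)^w+(p-1)\binom{n}{w}(-1)^w,
\]
which I then divide by $p$.

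I expect no real obstacle; the key input is the description of the dual, which was already established. As sanity checks, $A_0=\frac1p(1+(p-1))=1$ and $A_1=\frac1p\,n\,[(p-1)-(p-1)]=0$, consistent with $d=2$. Also $W_{\mathcal{C}}(1)=p^{n}/p=p^{p^3-2}$, which matches the dimension $p^3-2$. Alternatively, one could verify the result directly: $\mathcal{C}=\{x:\sum x_i=0\}$, and the count of weight-$w$ vectors with zero sum is $\binom{n}{w}$ times the number of zero-sum sequences of $w$ non-zero elements of $\mathbb{F}_p$, namely $\frac{(p-1)^w+(p-1)(-1)^w}{p}$, which agrees with the formula.
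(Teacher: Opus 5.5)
Your proposal is correct and matches the paper's proof: both take the repetition-code dual from Theorem~\ref{parameters_code_Laplacianmatrix}(2), apply the MacWilliams identity with $W_{\mathcal{C}^\perp}(z)=1+(p-1)z^{p^3-1}$, cancel the factor $(1+(p-1)z)^{p^3-1}$, and read off $A_w$ by binomial expansion. Your closing direct count of zero-sum vectors, using that $\mathcal{C}=\{x:\sum_i x_i=0\}$, is a valid independent check that the paper does not include.
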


\begin{proof}
From Theorem \ref{parameters_code_Laplacianmatrix}, the dual code $\mathcal{C}^\perp = \mathcal{C}_{\mathbb{F}_p}(L(\Gamma(R)))^\perp$ is the $1$-dimensional repetition code spanned by the all-ones vector $\mathbf{1}_{p^3-1}, \; \mathcal{C}^\perp = \left\{ \alpha \mathbf{1}_{p^3-1} : \alpha \in \mathbb{F}_p \right\}$.

The dual code contains  one  codeword of weight $0$ (for $\alpha = 0$);
and $p-1$ codewords of maximum weight $ p^3 - 1$ (for each $\alpha \in \mathbb{F}_p^*$).
Thus, the dual weight enumerator is  $W_{\mathcal{C}^\perp}(z) = 1 + (p-1)z^{p^3-1}$.
Applying the Mac-Williams identity for $p$-ary linear codes of length $n = p^3 - 1$ we have that 
\begin{align*}
W_{\mathcal{C}}(z) &= \frac{1}{|\mathcal{C}^\perp|} \left(1 + (p-1)z\right)^n W_{\mathcal{C}^\perp}\left( \frac{1 - z}{1 + (p-1)z} \right)\\
&= \frac{1}{p} \left(1 + (p-1)z\right)^{p^3-1} \left[ 1 + (p-1)\left( \frac{1 - z}{1 + (p-1)z} \right)^{p^3-1} \right] \\
&= \frac{1}{p} \left[ \left(1 + (p-1)z\right)^{p^3-1} + (p-1)(1 - z)^{p^3-1} \right]\\
&= \frac{1}{p} \left[\sum_{w=0}^{p^3-1} \binom{p^3-1}{w}(p-1)^w z^w+ \sum_{w=0}^{p^3-1} \binom{p^3-1}{w}(-1)^w z^w\right].
\end{align*}

Equating coefficients of $z^w$ gives
$A_w = \frac{1}{p} \binom{p^3-1}{w} \left[ (p-1)^w + (p-1)(-1)^w \right]$. 

\end{proof}



\begin{Theorem}\label{thm:subgraph-lcd-criterion}
Let $H$ be any non-trivial connected induced subgraph of $\Gamma(R)$ that contains the universal dominating clique $V_3 = S_{u^3}$, and let $L(H)$ be its Laplacian matrix. Then the linear code $\mathcal{C}_{\mathbb{F}_p}(L(H))$ generated by the rows of $L(H)$ over $\mathbb{F}_p$ is a linear complementary dual (LCD) code if and only if the vertex order of $H$ is not divisible by $p$:
\[
\mathcal{C}_{\mathbb{F}_p}(L(H)) \text{ is an LCD code over } \mathbb{F}_p \iff |V(H)| \not\equiv 0 \pmod p.
\]
\end{Theorem}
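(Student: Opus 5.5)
The plan is to use the symmetry of $L(H)$ to identify the dual code with a kernel, and then to compute that kernel explicitly from the block structure of $H$. Let $m=|V(H)|$ and $\mathcal{C}=\mathcal{C}_{\mathbb{F}_p}(L(H))$. Since $L(H)$ is symmetric, $\mathcal{C}^\perp=\ker_{\mathbb{F}_p}L(H)$ and $\mathcal{C}=(\ker L(H))^\perp$. The all-ones vector $\mathbf{1}_m$ always lies in $\ker L(H)$, because every row of a Laplacian sums to $0$. So everything reduces to computing $\ker L(H)$ over $\mathbb{F}_p$.

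\textbf{Structure of $H$.} By the adjacency rules of Section~3, $V(H)=V_3\cup A\cup B$ with $A\subseteq V_1$ and $B\subseteq V_2\cup V_4$. Here $A$ is independent, $B$ is a clique, every vertex of $V_3$ is adjacent to all other vertices of $H$, and there are no $A$--$B$ edges. The degrees are $p-1$ on $A$, $|B|+p-2$ on $B$ and $m-1$ on $V_3$.

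\textbf{Kernel equations.} Let $x\in\ker L(H)$, and write $s_3=\sum_{V_3}x$, $s_B=\sum_B x$ and $S=\sum_{V(H)}x$.
\begin{itemize}
\item For $a\in A$ the equation is $(p-1)x_a-s_3=0$, so $x_a=-s_3$.
\item For $b\in B$ the equation is $(|B|+p-2)x_b-(s_3+s_B-x_b)=0$. Subtracting two such equations shows that $x$ is constant on $B$, say equal to $c$. Substituting back gives $(p-1)c=s_3$, so $c=-s_3$.
\item For $z\in V_3$ the equation is $(m-1)x_z-(S-x_z)=0$, that is, $m\,x_z=S$.
\end{itemize}
Hence $x$ equals $-s_3$ on all of $A\cup B$. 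It follows that $S=s_3-(m-p+1)s_3\equiv -m\,s_3 \pmod p$.

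\textbf{Case $m\not\equiv 0\pmod p$.} Then $x_z=S/m$ is constant on $V_3$, so $s_3=(p-1)x_z=-x_z$. Therefore $x$ is constant on all of $V(H)$, and $\ker L(H)=\operatorname{span}\{\mathbf{1}_m\}$. It follows that $\mathcal{C}=\mathbf{1}_m^\perp$. A vector in the hull would be $\alpha\mathbf{1}_m$ with $\alpha\mathbf{1}_m\cdot\mathbf{1}_m=\alpha m=0$, which forces $\alpha=0$. Thus $\mathcal{C}$ is LCD. For $H=\Gamma(R)$ this recovers Theorem~\ref{parameters_code_Laplacianmatrix}.

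\textbf{Case $m\equiv 0\pmod p$.} Here the equations for $z\in V_3$ impose nothing beyond $S=0$, and $S\equiv -m\,s_3=0$ holds automatically. Thus the kernel is $\{x : x|_{A\cup B}=-s_3,\ x|_{V_3}\text{ arbitrary}\}$, of dimension $p-1$, and every kernel vector has coordinate sum $S=0$. So $\mathbf{1}_m\perp\ker L(H)$, which means $\mathbf{1}_m\in(\ker L(H))^\perp=\mathcal{C}$. Since also $\mathbf{1}_m\in\ker L(H)=\mathcal{C}^\perp$, the hull contains $\mathbf{1}_m\neq 0$ and $\mathcal{C}$ is not LCD.

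The main obstacle is this second case. When $p\mid m$ the kernel is larger than $\operatorname{span}\{\mathbf{1}_m\}$, so the naive argument does not apply, and the argument instead needs the identity $S\equiv -m\,s_3$ to show that the entire kernel lies in $\mathbf{1}_m^\perp$. The empty cases $A=\emptyset$ or $B=\emptyset$ need only a quick check that the equations above still go through.
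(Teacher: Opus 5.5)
Your explicit kernel computation is a better route than the paper's. The paper simply asserts that connectivity of $H$ forces $\operatorname{rank}_{\mathbb{F}_p}L(H)=|V(H)|-1$, which is false over $\mathbb{F}_p$. Your own $p\mid m$ case gives a kernel of dimension at least $p-1\ge 2$; for example, when $p=3$ the induced triangle on $\{u^3,2u^3,u\}$ has $L=3I-J\equiv -J$, of rank $1$.

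However, your computation has a gap of the same kind. Subtracting the equations for $b,b'\in B$ gives $(|B|+p-1)(x_b-x_{b'})=(|B|-1)(x_b-x_{b'})=0$ over $\mathbb{F}_p$. This forces $x_b=x_{b'}$ only when $|B|\not\equiv 1\pmod p$. Since $B$ may be any subset of $V_2\cup V_4$, which has $p(p-1)\ge p+1$ elements, $|B|=p+1$ is admissible. In that case every $y$ supported on $B$ with $\sum_B y=0$ satisfies $L(H)y=(|B|+p-1)y=0$. So for $p\nmid m$ you get $\ker L(H)=\operatorname{span}\{\mathbf{1}_m\}\oplus Z_B$, where $Z_B$ is the space of zero-sum vectors on $B$, not $\operatorname{span}\{\mathbf{1}_m\}$. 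Concretely, for $p=3$, $|B|=4$, $|A|=1$, $m=7$, the kernel is $4$-dimensional. Your conclusion $\mathcal{C}=\mathbf{1}_m^\perp$, and with it your LCD argument, therefore does not cover these $H$.

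The statement is still true there, but you need an extra step. The hull $\operatorname{Hull}(\mathcal{C})=\ker L(H)\cap(\ker L(H))^\perp$ is the radical of the dot product restricted to $\ker L(H)$. The two summands above are orthogonal, because every $y\in Z_B$ has coordinate sum $0$. Next, $\mathbf{1}_m\cdot\mathbf{1}_m=m\neq 0$. Finally, the form on $Z_B$ is non-degenerate: its radical is $Z_B\cap\operatorname{span}\{\mathbf{1}_B\}$, which is $0$ because $|B|\equiv 1\not\equiv 0$. Hence the hull is $0$.

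The other direction survives, although your kernel description and dimension $p-1$ are wrong in this subcase. For $p\mid m$, the row of any $z\in V_3$ reads $m x_z-S=-S=0$, so every kernel vector has coordinate sum $0$ directly. This gives $\mathbf{1}_m\in\mathcal{C}\cap\mathcal{C}^\perp$ without the identity $S\equiv -m s_3$, so what you call the main obstacle is not actually needed.
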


\begin{proof}
Because $H$ contains the universal dominating clique $V_3$, every vertex $v \in V(H) \setminus V_3$ is adjacent to all vertices of $V_3$. Thus, $H$ is connected. Over the finite field $\mathbb{F}_p$, the row sum of $L(H) = D(H) - A(H)$ is identically zero:
\[
L(H) \mathbf{1}_{|V(H)|}^T = \mathbf{0}.
\]
Because $H$ is connected, the rank of $L(H)$ over $\mathbb{F}_p$ is $|V(H)| - 1$, and its nullspace is strictly $1$-dimensional:
\[
\mathcal{C}_{\mathbb{F}_p}(L(H))^\perp = \operatorname{null}_{\mathbb{F}_p}(L(H)) = \operatorname{span}_{\mathbb{F}_p}\{\mathbf{1}_{|V(H)|}\}.
\]
By definition of the Euclidean hull, $\mathcal{C}_{\mathbb{F}_p}(L(H))$ is an LCD code if and only if $\operatorname{Hull}(\mathcal{C}_{\mathbb{F}_p}(L(H))) = \{\mathbf{0}\}$, which occurs if and only if no non-zero vector of the $1$-dimensional dual space belongs to the primal code:
\[
\operatorname{span}_{\mathbb{F}_p}\{\mathbf{1}_{|V(H)|}\} \cap \mathcal{C}_{\mathbb{F}_p}(L(H)) = \{\mathbf{0}\}.
\]
A non-zero vector $y = \alpha \mathbf{1}_{|V(H)|}$ ($\alpha \in \mathbb{F}_p^*$) belongs to the primal code $\mathcal{C}_{\mathbb{F}_p}(L(H))$ if and only if $y$ is orthogonal to every vector in the dual space $\mathcal{C}_{\mathbb{F}_p}(L(H))^\perp$. Testing self-orthogonality with the generator $\mathbf{1}_{|V(H)|}$ gives:
\[
y \cdot \mathbf{1}_{|V(H)|} = \sum_{i=1}^{|V(H)|} \alpha = \alpha |V(H)| \pmod p.
\]
Since $\alpha \not\equiv 0 \pmod p$, this inner product vanishes if and only if:
\[
|V(H)| \equiv 0 \pmod p.
\]
Thus, if $|V(H)| \equiv 0 \pmod p$, then $\mathbf{1}_{|V(H)|} \in \mathcal{C}_{\mathbb{F}_p}(L(H))$, which yields $\operatorname{Hull}(\mathcal{C}_{\mathbb{F}_p}(L(H))) = \operatorname{span}_{\mathbb{F}_p}\{\mathbf{1}_{|V(H)|}\} \ne \{\mathbf{0}\}$, so the code is not LCD. 

Conversely, if $|V(H)| \not\equiv 0 \pmod p$, then $y \cdot \mathbf{1}_{|V(H)|} \not\equiv 0 \pmod p$, which guarantees that $\mathbf{1}_{|V(H)|} \notin \mathcal{C}_{\mathbb{F}_p}(L(H))$. Hence, $\operatorname{Hull}(\mathcal{C}_{\mathbb{F}_p}(L(H))) = \{\mathbf{0}\}$, confirming that $\mathcal{C}_{\mathbb{F}_p}(L(H))$ is strictly an LCD code.
\end{proof}

\begin{Corollary}\label{cor:laplacian-lcd}
Let $p \ge 3$ be an odd prime.
\begin{enumerate}
    \item The Laplacian code $\mathcal{C}_{\mathbb{F}_p}(L(\Gamma(R)))$ of length $n = p^3 - 1$ is strictly an LCD code over $\mathbb{F}_p$.
    \item For the induced complete bipartite subgraph $\Gamma_{\mathrm{bip}} = \Gamma[V_1, V_3]$, the Laplacian code $\mathcal{C}_{\mathbb{F}_p}(L(\Gamma_{\mathrm{bip}}))$ is an LCD code over $\mathbb{F}_p$.
\end{enumerate}
\end{Corollary}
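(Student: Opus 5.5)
The plan is to apply Theorem~\ref{thm:subgraph-lcd-criterion} to each part. In each case we check two things: that the relevant graph is a non-trivial connected induced subgraph of $\Gamma(R)$ containing $V_3=S_{u^3}$, and that its vertex count is not divisible by $p$.

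For part (1), take $H=\Gamma(R)$ itself. This is trivially an induced subgraph containing $V_3$. It is connected by Theorem~\ref{thm:diam-girth}(i), since its diameter is $2$, and it has more than one vertex. Its order is $|V(\Gamma(R))|=p^3-1\equiv -1\pmod p$, which is nonzero. Theorem~\ref{thm:subgraph-lcd-criterion} then gives the LCD property. As a consistency check, Theorem~\ref{parameters_code_Laplacianmatrix}(2) identifies the dual as $\operatorname{span}\{\mathbf{1}_{p^3-1}\}$. Since $\mathbf{1}\cdot\mathbf{1}=p^3-1\neq 0$ in $\mathbb{F}_p$, the all-ones vector does not lie in the code, so the hull is $\{\mathbf{0}\}$.

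For part (2), set $H=\Gamma_{\mathrm{bip}}$, with vertex set $V_1\cup V_3$ of size $p^2(p-1)+(p-1)=(p-1)(p^2+1)$. Modulo $p$ this is $(-1)(1)=-1\not\equiv 0$. The main obstacle is a technical mismatch. $\Gamma_{\mathrm{bip}}$ is not literally an \emph{induced} subgraph of $\Gamma(R)$, because the induced subgraph $\Gamma[V_1\cup V_3]$ also keeps the clique edges inside $V_3$, whereas $\Gamma_{\mathrm{bip}}\cong K_{p^2(p-1),p-1}$ drops them. I would therefore not cite Theorem~\ref{thm:subgraph-lcd-criterion} verbatim, but rerun its proof, which uses only two facts: $H$ is connected, and $L(H)\mathbf{1}=\mathbf{0}$.

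The one delicate step in rerunning that proof is the claim that a connected graph has Laplacian rank $|V(H)|-1$ over $\mathbb{F}_p$. This can fail in positive characteristic, so I would verify it directly for $K_{m,n}$ with $m=p^2(p-1)\equiv 0$ and $n=p-1\equiv -1\pmod p$. The Laplacian has diagonal entries $n$ on the $V_1$ side and $m$ on the $V_3$ side. Suppose $L x=0$, with $a=\sum_{V_1}x$ and $b=\sum_{V_3}x$. The $V_1$ rows give $n x_i = b$, so $x_i=-b$ is constant on $V_1$. The $V_3$ rows give $m x_j = a$; since $m\equiv 0$, this forces $a=0$. Then $a=m(-b)=0$ holds automatically, while $x$ on $V_3$ is arbitrary subject to $b=\sum_{V_3}x$. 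So the nullspace is the set of vectors with arbitrary $V_3$ part and constant value $-b$ on $V_1$. It has dimension $n=p-1$, not $1$.

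Consequently the nullspace is larger than $\operatorname{span}\{\mathbf 1\}$, and the LCD question has to be settled directly. Concretely, I would compute $\mathcal{C}\cap\mathcal{C}^\perp$ using this explicit nullspace $N=\mathcal{C}^\perp$ and test whether $N\cap N^{\perp}=\{0\}$. This amounts to checking nondegeneracy of the inner product on $N$. A vector of $N$ has the form $x=(-b\mathbf{1}_m, y)$ with $\sum y=b$. The pairing of two such vectors is $mbb'+y\cdot y'=y\cdot y'$, because $m\equiv 0$. The standard form on $\mathbb{F}_p^{\,p-1}$ is nondegenerate, so $N\cap N^\perp=\{0\}$ and $\mathcal{C}$ is LCD.

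The main work is therefore this nondegeneracy check, which replaces the rank claim from Theorem~\ref{thm:subgraph-lcd-criterion}. I would present it as a short direct verification after the order computation.
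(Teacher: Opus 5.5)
Your proposal is correct, and in part (2) it takes a different and more careful route than the paper. The paper proves both parts in one line: it invokes Theorem~\ref{thm:subgraph-lcd-criterion} and notes that $p^3-1$ and $p^3-p^2+p-1$ are both $\equiv -1 \pmod p$.

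You rightly decline to do this for $\Gamma_{\mathrm{bip}}$. First, $\Gamma_{\mathrm{bip}}\cong K_{p^2(p-1),\,p-1}$ drops the clique edges on $V_3$, so the hypothesis is not met. More importantly, the proof of that theorem rests on the claim that a connected graph has Laplacian rank $|V(H)|-1$ over $\mathbb{F}_p$. Your kernel computation shows the nullity here is $p-1\ge 2$. Hence $\mathcal{C}^\perp\neq\operatorname{span}\{\mathbf{1}\}$, and the paper's argument does not apply. The rank step fails even for genuinely induced subgraphs containing $V_3$: for $v\in V_1$, $\Gamma[V_3\cup\{v\}]\cong K_p$ has Laplacian $pI-J\equiv -J$ over $\mathbb{F}_p$.

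Your replacement is correct and self-contained. You parametrize $N=\mathcal{C}^\perp$ by its $V_3$-part $y$ and observe that the pairing reduces to $y\cdot y'$ because $|V_1|\equiv 0\pmod p$. This gives more than the paper's route: it yields the true dimension $\dim\mathcal{C}_{\mathbb{F}_p}(L(\Gamma_{\mathrm{bip}}))=p^2(p-1)$, and it explains why the code is LCD even though its dual is not the repetition code.

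For part (1), your argument coincides with the paper's. Since you yourself flag the rank step of Theorem~\ref{thm:subgraph-lcd-criterion} as unreliable, make Theorem~\ref{parameters_code_Laplacianmatrix}(2) the actual justification rather than a consistency check. Its proof is omitted in the paper, but the statement is true. The kernel equations of $L(\Gamma(R))$ force $x$ to be constant on each of $V_1$, $V_2\cup V_4$ and $V_3$. The congruences $n_3\equiv -1$ and $n_2+n_4=p(p-1)\equiv 0$ then force these constants to coincide. Finally, $\mathbf{1}\cdot\mathbf{1}=p^3-1\not\equiv 0$ completes the argument.
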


\begin{proof}
Both subgraphs contain the universal dominating clique $V_3$. By Theorem~\ref{thm:subgraph-lcd-criterion}, it suffices to evaluate their vertex counts modulo $p$:
\[
|V(\Gamma(R))| = p^3 - 1 \equiv -1 \not\equiv 0 \pmod p,
\]
and
\[
|V(\Gamma_{\mathrm{bip}})| = |V_1| + |V_3| = p^2(p-1) + (p-1) = p^3 - p^2 + p - 1 \equiv -1 \not\equiv 0 \pmod p .
\]
Because neither order is divisible by $p$, the result follows immediately from Theorem~\ref{thm:subgraph-lcd-criterion}.
\end{proof}


To identify the coordinate permutation symmetries available for efficient decoding algorithms, we determine the full graph automorphism group $\operatorname{Aut}(\Gamma(R))$ and establish its embedding into the permutation automorphism groups of the resulting matrix codes.

\begin{Theorem}
The full automorphism group of the zero-divisor graph $\Gamma(R)$ is $\operatorname{Aut}(\Gamma(R)) \cong S_{p^2(p-1)} \times S_{p(p-1)} \times S_{p-1}$, with group order 
$|\operatorname{Aut}(\Gamma(R))| = \left(p^3 - p^2\right)! \cdot \left(p^2 - p\right)! \cdot (p - 1)!$.
\end{Theorem}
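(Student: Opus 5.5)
The plan is to show two inclusions: that every permutation of $Z^*(R)$ which fixes each block of the partition $Z^*(R)=V_1\cup V_{24}\cup V_3$ (setwise, acting arbitrarily inside each block) is a graph automorphism, and conversely that every automorphism must preserve each of the three blocks. Together these give $\operatorname{Aut}(\Gamma(R)) = \operatorname{Sym}(V_1)\times\operatorname{Sym}(V_{24})\times\operatorname{Sym}(V_3)\cong S_{p^2(p-1)}\times S_{p(p-1)}\times S_{p-1}$. The order formula then follows from $|V_1|=p^3-p^2$, $|V_{24}|=p^2-p$ and $|V_3|=p-1$.

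For the first inclusion I will use the adjacency description from Section~3. Every $v\in V_1$ has open neighbourhood $N(v)=V_3$, and $V_1$ is independent, so $V_1$ is a class of pairwise non-adjacent open twins. Every $v\in V_{24}$ has closed neighbourhood $N[v]=V_{24}\cup V_3$, because $V_{24}=u^2R\setminus\{0\}$ is a clique and elements with nonzero $u^2$-coefficient annihilate nothing in $V_1$. Every $v\in V_3$ has $N[v]=Z^*(R)$. In each block all vertices therefore have identical (open or closed) neighbourhoods. A transposition of two twins preserves adjacency, since it fixes every vertex outside the pair, and the pair's adjacency to each other is unchanged. Transpositions generate $\operatorname{Sym}(V_i)$, so each $\operatorname{Sym}(V_i)$ embeds in $\operatorname{Aut}(\Gamma(R))$. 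These three subgroups act on disjoint supports and commute, so their internal direct product lies in the automorphism group.

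For the converse I will use that automorphisms preserve degree. The degrees are $\deg(v)=p-1$ on $V_1$, $\deg(v)=p(p-1)-1+(p-1)=p^2-2$ on $V_{24}$, and $\deg(v)=p^3-2$ on $V_3$. These values are pairwise distinct for every odd prime $p$. For instance, $p-1=p^2-2$ would force $p^2-p-1=0$, which has no integer root, and $p^2-2<p^3-2$ trivially. Hence any $\sigma\in\operatorname{Aut}(\Gamma(R))$ maps each block onto itself, so $\sigma$ lies in the product above.

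The main point needing care is this block-preservation step: I must check that the degrees really separate the three blocks, including the small case $p=3$, where the degrees are $2$, $7$ and $25$. I must also confirm that the finer seven-class partition plays no role, since for example $S_{u^2}$ and $S_{u^2+u^3}$ are merged into the single twin class $V_{24}$. Once the degree check is done, the identification of the group and the factorial order are immediate.
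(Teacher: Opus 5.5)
Your proof is correct and matches the paper's argument. The pairwise distinct degrees $p-1<p^2-2<p^3-2$ force every automorphism to fix $V_1$, $V_{24}$ and $V_3$ setwise, and the open- or closed-twin structure of each block makes every within-block permutation an automorphism; this gives the direct product and the stated order. (One harmless slip: $V_{24}=S_{u^2}\cup S_{u^2+u^3}=u^2R\setminus u^3R$, whereas $u^2R\setminus\{0\}=V_{24}\cup V_3$; your clique claim and $N[v]=V_{24}\cup V_3$ still hold.)
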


\begin{proof}
Consider the degree partition $V(\Gamma(R)) = V_1 \cup V_{24} \cup V_3$, where the vertex subsets $V_1, V_{24}, V_3$ defined in Section~4.1.  For $p \ge 3$, the degrees satisfy $p-1 < p^2-2 < p^3-2$. 

Since graph automorphisms preserve vertex degrees, any automorphism leaves $V_1$, $V_{24}$, and $V_3$ setwise invariant.  every vertex in $V_1$ is adjacent solely to all vertices of $V_3$, so any permutation of $V_1$ preserves neighborhoods, yielding $S_{|V_1|} = S_{p^2(p-1)}$; the induced subgraph $\Gamma[V_{24}] \cong K_{p(p-1)}$ connects completely to $V_3$ with no edges to $V_1$, meaning any permutation of $V_{24}$ preserves all adjacencies, giving $S_{|V_{24}|} = S_{p(p-1)}$; and the induced clique $\Gamma[V_3] \cong K_{p-1}$ connects to all of $V(\Gamma(R))$, ensuring that any permutation of $V_3$ preserves adjacencies, yielding $S_{|V_3|} = S_{p-1}$.  Thus, $\operatorname{Aut}(\Gamma(R)) \cong S_{p^2(p-1)} \times S_{p(p-1)} \times S_{p-1}$.

\end{proof}
\begin{Theorem}
Let $\mathcal{C}_{\mathbb{F}_p}(A(\Gamma(R)))$ and $\mathcal{C}_{\mathbb{F}_p}(L(\Gamma(R)))$ be the linear codes generated by the adjacency and Laplacian matrices of $\Gamma(R)$. Then following statements hold:
\begin{enumerate}
    \item $\operatorname{Aut}(\Gamma(R)) \le \operatorname{PAut}(\mathcal{C}_{\mathbb{F}_p}(A(\Gamma(R))))$;
    \item $\operatorname{PAut}(\mathcal{C}_{\mathbb{F}_p}(L(\Gamma(R)))) \cong S_{p^3-1}$.
\end{enumerate}
\end{Theorem}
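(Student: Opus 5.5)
For part (1), I will use the standard observation that a graph automorphism acts on rows and columns of the adjacency matrix at the same time. Let $\sigma \in \operatorname{Aut}(\Gamma(R))$ and let $P_\sigma$ be its permutation matrix, indexed by $Z^*(R)$ in the ordering $(V_1,V_2,V_3,V_4)$. Since $\sigma$ preserves adjacency, $a_{\sigma(x)\sigma(y)} = a_{xy}$ for all vertices $x,y$, which means $P_\sigma A(\Gamma(R)) P_\sigma^T = A(\Gamma(R))$.

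Now take the row $r_x$ indexed by $x$ and permute its coordinates by $\sigma$. The result is exactly the row $r_{\sigma^{-1}(x)}$ (or $r_{\sigma(x)}$, depending on the convention used for $\pi$ in the definition of $\mathrm{PAut}$). So coordinate permutation by $\sigma$ sends the generating set of rows onto itself. Because coordinate permutation is $\mathbb{F}_p$-linear, it maps the row space $\mathcal{C}_{\mathbb{F}_p}(A(\Gamma(R)))$ into itself, and hence onto itself, since it is a bijection on a finite set. This gives $\sigma \in \operatorname{PAut}(\mathcal{C}_{\mathbb{F}_p}(A(\Gamma(R))))$. The map $\sigma \mapsto \sigma$ is an injective group homomorphism, so it yields the embedding $\operatorname{Aut}(\Gamma(R)) \le \operatorname{PAut}(\mathcal{C}_{\mathbb{F}_p}(A(\Gamma(R))))$. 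The only care needed is to fix the left/right action convention consistently.

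For part (2), I will not work with $L$ directly. Instead I will use $\mathrm{PAut}(\mathcal{C}) = \mathrm{PAut}(\mathcal{C}^\perp)$ from the preliminaries together with Theorem~\ref{parameters_code_Laplacianmatrix}(2). That theorem identifies $\mathcal{C}_{\mathbb{F}_p}(L(\Gamma(R)))^\perp$ as the repetition code $\{\alpha \mathbf{1}_{p^3-1} : \alpha \in \mathbb{F}_p\}$. Every coordinate permutation fixes $\alpha\mathbf{1}_{p^3-1}$, so every $\pi \in S_{p^3-1}$ preserves the dual code. Hence $\operatorname{PAut}$ of the dual is all of $S_{p^3-1}$, and therefore so is $\operatorname{PAut}(\mathcal{C}_{\mathbb{F}_p}(L(\Gamma(R))))$.

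As a consistency check, I can also argue directly. The primal code is the zero-sum code $\{c : \sum_i c_i = 0\}$, since it is the dual of the repetition code. The condition $\sum_i c_i = 0$ is visibly invariant under any reordering of coordinates.

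The main obstacle is really just making sure the cited ingredients hold. Part (2) depends on the dual being exactly one-dimensional, which Theorem~\ref{parameters_code_Laplacianmatrix} provides: connectivity of $\Gamma(R)$ gives rank $p^3-2$ over $\mathbb{F}_p$. One also has to be careful with the convention in the definition of $\mathrm{PAut}$. Beyond this, no computation is required.
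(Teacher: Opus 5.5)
Your proof is correct and is essentially the paper's argument: part (1) uses $P_\sigma A(\Gamma(R)) P_\sigma^T = A(\Gamma(R))$ to see that permuting coordinates by $\sigma$ permutes the rows of $A(\Gamma(R))$, and part (2) combines $\mathrm{PAut}(\mathcal{C}) = \mathrm{PAut}(\mathcal{C}^\perp)$ with the fact that the Laplacian dual is the repetition code. One correction to your side remark: connectivity alone does not force $\operatorname{rank}_{\mathbb{F}_p} L = |V|-1$ (for the cycle $C_p$ the rank over $\mathbb{F}_p$ is $p-2$); it holds here because the number of spanning trees of $\Gamma(R)$, namely $(p^3-1)^{p-2}(p-1)^{p^2(p-1)}(p^2-1)^{p^2-p-1}$, is prime to $p$.
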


\begin{proof}
 \textbf{(1):} Let $\sigma \in \operatorname{Aut}(\Gamma(R))$ with associated permutation matrix $P_\sigma$. By definition of a graph automorphism, $P_\sigma A(\Gamma(R)) P_\sigma^T = A(\Gamma(R))$, which implies that $A(\Gamma(R)) P_\sigma^T = P_\sigma^T A(\Gamma(R))$.  Now 
  $\mathcal{C}_{\mathbb{F}_p}(A(\Gamma(R))) P_\sigma^T = \operatorname{rowsp}_{\mathbb{F}_p}\big(P_\sigma^T A(\Gamma(R))\big) = \operatorname{rowsp}_{\mathbb{F}_p}(A(\Gamma(R))) = \mathcal{C}_{\mathbb{F}_p}(A(\Gamma(R)))$.
    Thus, $\sigma \in \operatorname{PAut}(\mathcal{C}_{\mathbb{F}_p}(A(\Gamma(R))))$,  and therefore  $\operatorname{Aut}(\Gamma(R)) \le \operatorname{PAut}(\mathcal{C}_{\mathbb{F}_p}(A(\Gamma(R))))$.

   \textbf{(2):}  By Theorem \ref{parameters_code_Laplacianmatrix}, $\mathcal{C}_{\mathbb{F}_p}(L(\Gamma(R)))^\perp = \operatorname{span}_{\mathbb{F}_p}\{\mathbf{1}_{p^3-1}\}$. For any permutation $\pi \in S_{p^3-1}$, the all-ones vector is fixed: $\pi(\mathbf{1}_{p^3-1}) = \mathbf{1}_{p^3-1}$. Thus, $\operatorname{PAut}(\mathcal{C}_{\mathbb{F}_p}(L(\Gamma(R)))^\perp) \cong S_{p^3-1}$. Since $\operatorname{PAut}(\mathcal{C}) = \operatorname{PAut}(\mathcal{C}^\perp)$ for any linear code $\mathcal{C}$, it follows directly that $\operatorname{PAut}(\mathcal{C}_{\mathbb{F}_p}(L(\Gamma(R)))) \cong S_{p^3-1}$.

\end{proof}

\begin{Example}\label{ex:matrix-codes-p3}
For $p = 3$, the zero-divisor graph $\Gamma(R)$ has order $n = 3^3 - 1 = 26$ vertices partitioned into blocks of sizes $n_1 = 18$, $n_2 = 2$, $n_3 = 2$, and $n_4 = 4$.
\begin{enumerate}
    \item Adjacency Matrix Code $\mathcal{C}_{\mathbb{F}_3}(A(\Gamma(R)))$:
    The row space of the $26 \times 26$ block adjacency matrix over $\mathbb{F}_3$ generates a $[26, 9, 2]_3$ linear code, whose Euclidean dual is a $[26, 17, 2]_3$ code supported entirely on $V_1$. By Theorem~\ref{thm:adj-hull-dimension}, its Euclidean hull is strictly $1$-dimensional:
    \[
    \operatorname{Hull}\left(\mathcal{C}_{\mathbb{F}_3}(A(\Gamma(R)))\right) = \operatorname{span}_{\mathbb{F}_3}\left\{ (\mathbf{1}_{18}, \mathbf{0}_2, \mathbf{0}_2, \mathbf{0}_4)^T \right\},
    \]
    confirming that the adjacency code is almost-LCD with $\dim_{\mathbb{F}_3}(\operatorname{Hull}) = 1$.

    \item Laplacian Matrix Code $\mathcal{C}_{\mathbb{F}_3}(L(\Gamma(R)))$:
    The Laplacian matrix generates a $[26, 24, 2]_3$ linear code with dual repetition code $\mathcal{C}_{\mathbb{F}_3}(L(\Gamma(R)))^\perp = \operatorname{span}_{\mathbb{F}_3}\{\mathbf{1}_{26}\}$ of parameters $[26, 1, 26]_3$. Because $|V(\Gamma(R))| = 26 \equiv 2 \not\equiv 0 \pmod 3$, the code is strictly LCD:
    \[
    \operatorname{Hull}\left(\mathcal{C}_{\mathbb{F}_3}(L(\Gamma(R)))\right) = \{\mathbf{0}\}.
    \]
    By Theorem~\ref{thm:weight-enum-laplacian}, its polynomial weight distribution is given by:
    \[
    W_{\mathcal{C}}(z) = \frac{1}{3}\left[ (1 + 2z)^{26} + 2(1 - z)^{26} \right] = 1 + 650 z^2 + 10400 z^3 + \dots + 22369622 z^{26}.
    \]
\end{enumerate}
\end{Example}



\subsection{Linear Codes from Defining Sets of \texorpdfstring{$\Gamma(R)$}{Gamma(R)}}
This subsection constructs evaluation linear codes over $\mathbb{F}_p$ derived from extremal structural subgraphs of $\Gamma(R)$. By selecting defining sets from the maximal clique $D_K = u^2 R \setminus \{0\}$ and the maximal independent set $D_{V_1} = uR \setminus u^2 R$, we evaluate the generator matrices, code dimensions, and Hamming weight distributions of $\mathcal{C}_{D_K}$ and $\mathcal{C}_{D_{V_1}}$. We show that $\mathcal{C}_{D_K}$ is a constant-weight minimal code, while $\mathcal{C}_{D_{V_1}}$ is an optimal two-weight code whose minimum-weight codewords are minimal, with both families achieving the Griesmer bound with equality.

Let $(R,+)$ be the additive group of the ring $R = \mathbb{F}_p[x]/\langle x^4 \rangle \cong \mathbb{F}_p + u\mathbb{F}_p + u^2\mathbb{F}_p + u^3\mathbb{F}_p$, which forms a $4$-dimensional vector space over its residue field $\mathbb{F}_p$ equipped with the canonical basis $\mathcal{B} = \{1, u, u^2, u^3\}$. For any element $y = a_0 + a_1 u + a_2 u^2 + a_3 u^3 \in R$, its coordinate vector with respect to $\mathcal{B}$ is denoted by
$[y] = (a_0, a_1, a_2, a_3)^T \in \mathbb{F}_p^4$.
For an ordered subset $D = \{d_1, d_2, \dots, d_n\} \subseteq Z^*(R)$, the evaluation linear code $\mathcal{C}_D$ over $\mathbb{F}_p$ is defined by $\mathcal{C}_D = \left\{ c(x) = (x \cdot [d_1],\, x \cdot [d_2],\, \dots,\, x \cdot [d_n]) : x \in \mathbb{F}_p^4 \right\} \subseteq \mathbb{F}_p^n$,
where $x \cdot [d]$ denotes the standard Euclidean dot product on $\mathbb{F}_p^4$. The generator matrix of $\mathcal{C}_D$ is the matrix $G_D = \begin{pmatrix} [d_1] & [d_2] & \cdots & [d_n] \end{pmatrix} \in \mathbb{F}_p^{4 \times n}$, and its dimension over $\mathbb{F}_p$ is given by $\operatorname{rank}_{\mathbb{F}_p}(G_D) = \dim_{\mathbb{F}_p}(\operatorname{span}_{\mathbb{F}_p}(D))$.


From Theorem~\ref{thm:clique-chromatic}, the maximal clique $K$ of $\Gamma(R)$ has cardinality $\omega(\Gamma(R)) = p^2 - 1$ and corresponds precisely to the punctured principal annihilator ideal:
\[
D_K = S_{u^2} \cup S_{u^3} \cup S_{u^2+u^3} = u^2 R \setminus \{0\}.
\]

To construct structured minimal linear codes suitable for secret-sharing schemes, we evaluate the generator matrices and weight distributions of the evaluation codes defined over the maximal clique $D_K$ and the maximal independent set $D_{V_1}$, showing that both achieve the Griesmer bound with equality. 
\begin{Theorem}
Let $D_K = u^2 R \setminus \{0\}$ be the maximal clique of $\Gamma(R)$. Then the linear code $\mathcal{C}_{D_K}$ over $\mathbb{F}_p$ is a Griesmer-optimal, constant-weight minimal 
$[p^2 - 1,\, 2,\, p^2 - p]_p$ linear code.
\end{Theorem}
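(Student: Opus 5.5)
The plan is to identify $D_K$ with the set of nonzero vectors of a $2$-dimensional subspace of $\mathbb{F}_p^4$ and read off every parameter from that. Every element of $D_K = u^2R\setminus\{0\}$ has the form $a_2u^2 + a_3u^3$ with $(a_2,a_3)\neq(0,0)$. Hence the coordinate vectors $[d]$ run exactly once over all nonzero vectors of $W = \{(0,0,a_2,a_3)^T\} \cong \mathbb{F}_p^2$. The length is then $n = |D_K| = p^2-1$. Since $D_K$ spans $W$ (it contains $u^2$ and $u^3$), we get $\operatorname{rank}_{\mathbb{F}_p}(G_D)=2$, so $k=2$.

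For the weights, fix $x=(x_0,x_1,x_2,x_3)\in\mathbb{F}_p^4$. The codeword $c(x)$ has entries $x_2a_2+x_3a_3$ as $(a_2,a_3)$ ranges over $\mathbb{F}_p^2\setminus\{0\}$. If $(x_2,x_3)=(0,0)$, then $c(x)=0$; this is consistent with the dimension count, since the kernel of $x\mapsto c(x)$ is $2$-dimensional. Otherwise the map $(a_2,a_3)\mapsto x_2a_2+x_3a_3$ is a nonzero linear functional on $\mathbb{F}_p^2$. Its kernel has $p$ elements, including $0$, so exactly $p-1$ nonzero pairs give a zero coordinate. Thus $\operatorname{wt}(c(x)) = (p^2-1)-(p-1) = p^2-p$ for every nonzero codeword. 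The code is therefore constant-weight with $d = p^2-p$, and it is a $[p^2-1,2,p^2-p]_p$ code (essentially the simplex code repeated $p-1$ times, i.e.\ each projective point taken $p-1$ times).

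Minimality follows because $w_{\max}=w_{\min}$, so $w_{\max}/w_{\min}=1<p/(p-1)$ and Result~\ref{res:ashikhmin} applies. If one prefers a direct argument: two nonzero codewords whose supports are nested must have equal supports because their weights are equal. Equal supports mean equal zero sets, i.e.\ the same kernel line of the functional, so the functionals are proportional and the codewords are scalar multiples.

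Griesmer optimality is the check $\sum_{i=0}^{1}\lceil (p^2-p)/p^i\rceil = (p^2-p)+(p-1) = p^2-1 = n$. No step is a real obstacle. The only point needing care is to state explicitly that the coordinates $x_0,x_1$ of $x$ never contribute, since all $[d]$ have vanishing first two entries. That fact is what makes the generator matrix have rank $2$ rather than $4$, and what makes the weight computation reduce to counting a functional on $\mathbb{F}_p^2$.
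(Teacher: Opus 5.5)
Your proposal is correct and follows the paper's proof essentially step by step: you identify the coordinate vectors with the nonzero vectors of $\{0\}\times\{0\}\times\mathbb{F}_p^2$ to get length $p^2-1$ and dimension $2$, count the $p-1$ nonzero zeros of a nonzero functional to get constant weight $p^2-p$, then apply the Ashikhmin--Barg criterion and check the Griesmer sum. Your direct minimality argument (equal weights force equal supports, hence the same kernel line, hence proportional codewords) and your remark that the code is the $2$-dimensional simplex code repeated $p-1$ times are both correct, though the paper does not need them.
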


\begin{proof}
 At first  it is easy to observe that the length of $\mathcal{C}_{D_K}$ equals the cardinality of the defining set $n = |D_K| = |S_{u^2}| + |S_{u^3}| + |S_{u^2+u^3}| = (p - 1) + (p - 1) + (p - 1)^2 = p^2 - 1$.

Next  any element $d \in D_K$ is uniquely expressed as $d = a_2 u^2 + a_3 u^3$ where $(a_2, a_3) \in \mathbb{F}_p^2 \setminus \{(0,0)\}$. The coordinate vector is $[d] = (0, 0, a_2, a_3)^T$. Thus, the first two rows of $G_{D_K}$ are identically zero, whereas the bottom two rows run through all $p^2 - 1$ non-zero vectors in $\mathbb{F}_p^2$. Hence $\operatorname{span}_{\mathbb{F}_p}(D_K) = \operatorname{span}_{\mathbb{F}_p}\{u^2, u^3\}$, 
    which is an $\mathbb{F}_p$-vector subspace of dimension $2$. Consequently, $\dim(\mathcal{C}_{D_K}) = 2$.

Let  $x = (x_0, x_1, x_2, x_3) \in \mathbb{F}_p^4$, the coordinate entry at $d$ is $c(x)_d = x_2 a_2 + x_3 a_3$. Let $\bar{x} = (x_2, x_3) \in \mathbb{F}_p^2$. If $\bar{x} = (0,0)$, then $c(x) = \mathbf{0}$. If $\bar{x} \ne (0,0)$, the linear functional $L_{\bar{x}}(a_2, a_3) = x_2 a_2 + x_3 a_3$ defines a hyperplane of codimension $1$ in $\mathbb{F}_p^2$, whose kernel has order $p$. The number of elements $(a_2, a_3) \in \mathbb{F}_p^2 \setminus \{(0,0)\}$ satisfying $L_{\bar{x}}(a_2, a_3) = 0$ is $|\ker(L_{\bar{x}}) \setminus \{(0,0)\}| = p - 1$.
    Therefore, the Hamming weight of every non-zero codeword is constant  $w(c(x)) = |D_K| - (p - 1) = (p^2 - 1) - (p - 1) = p^2 - p = p(p - 1)$.

Because all non-zero codewords possess the identical Hamming weight $w_{\min} = w_{\max} = p(p - 1)$, the Ashikhmin--Barg condition is satisfied $\frac{w_{\max}}{w_{\min}} = 1 < \frac{p}{p - 1}$.
Thus, $\mathcal{C}_{D_K}$ is a minimal linear code. Furthermore, evaluating the Griesmer bound for an $[n, 2, p^2 - p]_p$ code $\sum_{i=0}^{1} \left\lceil \frac{p^2 - p}{p^i} \right\rceil = (p^2 - p) + \left\lceil \frac{p^2 - p}{p} \right\rceil = (p^2 - p) + (p - 1) = p^2 - 1 = n$.
    The code attains the Griesmer bound with equality, proving Griesmer-optimality.

\end{proof}

\begin{Remark}
Since all $p^2 - 1$ non-zero codewords of $\mathcal{C}_{D_K}$ have identical weight $p(p - 1)$, the complete weight enumerator of $\mathcal{C}_{D_K}$ is $W_{\mathcal{C}_{D_K}}(z) = 1 + (p^2 - 1)z^{p(p - 1)}$.
\end{Remark}


Next, consider the maximal independent set $V_1$ of $\Gamma(R)$ consisting of all vertices adjacent exclusively to $S_{u^3}$  $D_{V_1} = V_1 = S_u \cup S_{u+u^2} \cup S_{u+u^3} \cup S_{u+u^2+u^3} = uR \setminus u^2R$. The set $D_{V_1}$ contains all non-zero zero-divisors whose leading index of nilpotency is precisely $1$, with cardinality $n_1 = p^2(p - 1) = p^3 - p^2$.

\begin{Lemma}\label{lem:subspace-intersection}
Let $R = \mathbb{F}_p[x]/\langle x^4 \rangle$ and let $D_{V_1} = uR \setminus u^2 R$ be the maximal independent defining set of $\Gamma(R)$. Identifying $D_{V_1}$ with its coordinate projections in $\mathbb{F}_p^3$ across the basis $\{u, u^2, u^3\}$, the intersection of $D_{V_1}$ with any $1$-dimensional subspace $W \subset \mathbb{F}_p^3$ satisfies:
\[
|D_{V_1} \cap W| = 
\begin{cases}
0, & \text{if } W \subseteq \{0\} \times \mathbb{F}_p^2, \\
p - 1, & \text{if } W \not\subseteq \{0\} \times \mathbb{F}_p^2.
\end{cases}
\]
In particular, $\max_{\dim(W)=1} |D_{V_1} \cap W| = p - 1$.
\end{Lemma}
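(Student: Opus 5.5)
The plan is to make the identification explicit and then split into the two cases of the statement. Every element of $uR$ has the form $a_1u+a_2u^2+a_3u^3$, so its coordinate projection across $\{u,u^2,u^3\}$ is $(a_1,a_2,a_3)\in\mathbb{F}_p^3$. This gives a bijection $uR\to\mathbb{F}_p^3$. Under it, $u^2R$ corresponds exactly to the plane $P=\{0\}\times\mathbb{F}_p^2$. Hence $D_{V_1}=uR\setminus u^2R$ is identified with the complement $\mathbb{F}_p^3\setminus P$, i.e.\ the vectors with $a_1\neq 0$. As a sanity check, this set has $p^3-p^2=p^2(p-1)$ elements, matching $n_1$.

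Next, fix a $1$-dimensional subspace $W=\operatorname{span}_{\mathbb{F}_p}\{w\}$ with $w\neq 0$. Then $W=\{\lambda w:\lambda\in\mathbb{F}_p\}$ has exactly $p$ elements, and $|D_{V_1}\cap W|=|W\setminus P|$. Two cases arise.
\begin{enumerate}
\item If $W\subseteq P$, then $W\setminus P=\emptyset$, so the intersection is $0$.
\item If $W\not\subseteq P$, then $w_1\neq 0$, since otherwise $W\subseteq P$. For $\lambda w$ the first coordinate is $\lambda w_1$, which is nonzero exactly when $\lambda\in\mathbb{F}_p^*$. So $W\cap P=\{0\}$ and $|W\setminus P|=p-1$.
\end{enumerate}
Equivalently, in the second case $W\cap P$ is a proper subspace of the line $W$, so it must be $\{0\}$.

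For the final claim, note that lines not contained in $P$ exist; for example, $\operatorname{span}\{(1,0,0)\}$. Therefore the maximum over all $W$ equals $p-1$.

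No step is a serious obstacle. The only point needing care is stating the identification precisely. The statement says ``coordinate projections in $\mathbb{F}_p^3$'', which implicitly drops the $a_0=0$ coordinate, and the argument should note that this projection is injective on $uR$. It is also worth remarking that $0\notin D_{V_1}$, consistent with the count $p-1$ rather than $p$.
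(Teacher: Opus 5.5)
Your proof is correct and follows essentially the same route as the paper. You write the line as $\{\lambda w : \lambda \in \mathbb{F}_p\}$ and note that $\lambda w \in D_{V_1}$ exactly when $\lambda w_1 \neq 0$, giving count $0$ when $w_1 = 0$ and $p-1$ when $w_1 \neq 0$; your identification of $D_{V_1}$ with $\mathbb{F}_p^3 \setminus (\{0\}\times\mathbb{F}_p^2)$ and your witness line $\operatorname{span}\{(1,0,0)\}$ for the maximum make explicit what the paper leaves implicit.
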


\begin{proof}
Let $W = \operatorname{span}_{\mathbb{F}_p}\{(b_1, b_2, b_3)\}$ be a $1$-dimensional subspace in $\mathbb{F}_p^3$ generated by a non-zero vector $(b_1, b_2, b_3)$. Any point $v \in W$ is of the form $\lambda (b_1, b_2, b_3)$ for some scalar $\lambda \in \mathbb{F}_p$. By definition of $D_{V_1}$, an element belongs to $D_{V_1}$ if and only if its leading coefficient is non-zero, which requires $\lambda b_1 \in \mathbb{F}_p^*$.
\begin{itemize}
    \item If $b_1 = 0$: then $\lambda b_1 = 0$ for all $\lambda \in \mathbb{F}_p$. Thus, no non-zero vector in $W$ belongs to $D_{V_1}$, giving $|D_{V_1} \cap W| = 0$.
    \item If $b_1 \ne 0$: then $\lambda b_1 \in \mathbb{F}_p^*$ holds for all $p - 1$ choices of $\lambda \in \mathbb{F}_p^*$. Consequently, all non-zero points along the line $W$ lie in $D_{V_1}$, yielding $|D_{V_1} \cap W| = p - 1$.
\end{itemize}
Therefore, $\max_{\dim(W)=1} |D_{V_1} \cap W| = p - 1$, with equality attained if and only if $b_1 \ne 0$.
\end{proof}

\begin{Theorem}\label{thm:indep-code}
Let $D_{V_1} = uR \setminus u^2 R$ be the maximal independent defining set of $\Gamma(R)$. Then the linear evaluation code $\mathcal{C}_{D_{V_1}}$ over $\mathbb{F}_p$ is an optimal, two-weight $[p^2(p-1), \, 3, \, p(p-1)^2]_p$ linear code achieving the Griesmer bound with equality. Its complete weight distribution is given by:
\[
A_0 = 1, \quad A_{p(p-1)^2} = (p-1)(p^2 + p), \quad \text{and} \quad A_{p^2(p-1)} = p - 1.
\]
Furthermore, all $(p-1)(p^2+p)$ minimum-weight codewords are minimal codewords.
\end{Theorem}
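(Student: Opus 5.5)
The plan is to compute every codeword weight directly by counting the zeros of a linear functional on $D_{V_1}$. Every $d\in D_{V_1}$ has the form $d=a_1u+a_2u^2+a_3u^3$ with $a_1\in\mathbb{F}_p^*$ and $a_2,a_3\in\mathbb{F}_p$, so $[d]=(0,a_1,a_2,a_3)^T$. The length is then $|D_{V_1}|=p^2(p-1)$. For $x=(x_0,x_1,x_2,x_3)$ the entry of $c(x)$ at $d$ is $y\cdot(a_1,a_2,a_3)$, where $y=(x_1,x_2,x_3)$. Hence $c(x)$ depends only on $y$, and the code is the image of the map $y\mapsto c(y)$ on $\mathbb{F}_p^3$. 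For the dimension I will note that $D_{V_1}$ contains $u$, $u+u^2$ and $u+u^3$, so $\operatorname{span}_{\mathbb{F}_p}(D_{V_1})=\operatorname{span}\{u,u^2,u^3\}$. This gives $\operatorname{rank}(G_{D_{V_1}})=3$, and so $y\mapsto c(y)$ is injective.

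The weights split into two cases according to $y\neq 0$.
\begin{enumerate}
\item Case $(x_2,x_3)\neq(0,0)$. For each fixed $a_1\in\mathbb{F}_p^*$, the equation $x_2a_2+x_3a_3=-x_1a_1$ defines an affine line in $\mathbb{F}_p^2$ with exactly $p$ solutions. So $c(y)$ has $p(p-1)$ zeros and weight $p^2(p-1)-p(p-1)=p(p-1)^2$. There are $p(p^2-1)=(p-1)(p^2+p)$ such $y$.
\item Case $x_2=x_3=0$, $x_1\neq0$. The entry $x_1a_1$ never vanishes, so $c(y)$ has full weight $p^2(p-1)$. There are $p-1$ such $y$.
\end{enumerate}
This gives the stated distribution with $A_0=1$. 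These counts are consistent with Lemma~\ref{lem:subspace-intersection}, which is the dual (line) version of the same count. For Griesmer-optimality I will evaluate $\sum_{i=0}^{2}\lceil p(p-1)^2/p^i\rceil$. The three terms are $p(p-1)^2$, $(p-1)^2$, and $\lceil (p-1)^2/p\rceil=\lceil p-2+1/p\rceil=p-1$. Their sum is $(p-1)(p^2-p+p-1+1)=p^2(p-1)=n$.

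The main obstacle is minimality of the minimum-weight codewords. Here $w_{\max}/w_{\min}=p/(p-1)$ holds with equality, so Result~\ref{res:ashikhmin} does not apply and a direct argument is needed. Let $c(y)$ have minimum weight, so $(y_2,y_3)\neq 0$, and suppose $c(z)\neq 0$ with $\operatorname{supp}(c(z))\subseteq\operatorname{supp}(c(y))$. Then $z$ vanishes on the zero set $Z_y=\{a\in\mathbb{F}_p^*\times\mathbb{F}_p^2 : y\cdot a=0\}$. Since $|Z_y|=p(p-1)>p-1$, the set $Z_y$ is not contained in a single line through the origin. It therefore spans the $2$-dimensional hyperplane $y^{\perp}$. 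It follows that $z$ annihilates $y^{\perp}$, so $z=\alpha y$ and $c(z)=\alpha c(y)$. (The maximum-weight codewords have empty zero set, and the inclusion forces nothing about them; they are not needed for the claim.) This proves that all $(p-1)(p^2+p)$ minimum-weight codewords are minimal.
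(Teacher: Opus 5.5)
Your proof is correct and follows essentially the same route as the paper: you directly count the zeros of the functional on $D_{V_1}$ (your fiber-by-fiber count over $a_1\in\mathbb{F}_p^*$ just merges the paper's two subcases $x_1=0$ and $x_1\neq 0$), use the same Griesmer sum, and for minimality rely on the same key fact that a line through the origin meets $D_{V_1}$ in at most $p-1$ points while $|Z_y|=p(p-1)$. You phrase that last step as ``$Z_y$ spans $y^{\perp}$, so $z\in\langle y\rangle$'', whereas the paper says ``$H_x\cap H_y$ is a line, so Lemma~\ref{lem:subspace-intersection} gives $|Z(x)|\le p-1$''; these are dual formulations of the same argument, and your remark that Result~\ref{res:ashikhmin} cannot be used because $w_{\max}/w_{\min}=p/(p-1)$ exactly correctly explains why a direct argument is needed.
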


\begin{proof}
By definition, the code length is $n = |D_{V_1}| = p^2(p-1)$. Every element $v \in D_{V_1}$ is uniquely represented as $v = a_1 u + a_2 u^2 + a_3 u^3$, where $a_1 \in \mathbb{F}_p^*$ and $a_2, a_3 \in \mathbb{F}_p$. Its coordinate vector with respect to the canonical basis $\{1, u, u^2, u^3\}$ is $[v] = (0, a_1, a_2, a_3)^T$. Because $a_1$ runs through $\mathbb{F}_p^*$ and $(a_2, a_3)$ vary freely over $\mathbb{F}_p^2$, the generator matrix $G_{D_{V_1}}$ has rank $3$ over $\mathbb{F}_p$, giving $\dim(\mathcal{C}_{D_{V_1}}) = 3$.

For any non-zero evaluation vector $x = (0, x_1, x_2, x_3) \in \mathbb{F}_p^4$, the coordinate value corresponding to $v = (a_1, a_2, a_3) \in \mathbb{F}_p^* \times \mathbb{F}_p^2$ is:
\[
c(x)_v = x_1 a_1 + x_2 a_2 + x_3 a_3.
\]
To evaluate the Hamming weight $\wt(c(x)) = |D_{V_1}| - N_0(x)$, where $N_0(x) = |\{v \in D_{V_1} : c(x)_v = 0\}|$, we examine two cases:

\begin{itemize}
    \item \emph{Case 1 ($(x_2, x_3) = (0,0)$):} Here, $x_1 \ne 0$, and the evaluation functional simplifies to $c(x)_v = x_1 a_1$. Since $a_1 \in \mathbb{F}_p^*$ for all $v \in D_{V_1}$, $x_1 a_1 \ne 0$ identically. Thus, $N_0(x) = 0$, giving Hamming weight:
    \[
    \wt(c(x)) = |D_{V_1}| - 0 = p^2(p-1).
    \]
    There are precisely $p - 1$ such non-zero evaluation vectors of the form $(0, x_1, 0, 0)$. Each such codeword has full support $\operatorname{supp}(c(x)) = D_{V_1}$.

    \item \emph{Case 2 ($(x_2, x_3) \ne (0,0)$):}
    \begin{itemize}
        \item If $x_1 = 0$: the vanishing condition reduces to the linear equation $x_2 a_2 + x_3 a_3 = 0$. In $\mathbb{F}_p^2$, this equation has exactly $p$ solutions for $(a_2, a_3)$. Since $a_1$ can be selected arbitrarily from $\mathbb{F}_p^*$ in $p - 1$ ways, the number of zero positions is $N_0(x) = p(p-1)$.
        \item If $x_1 \ne 0$: for each choice of $(a_2, a_3) \in \mathbb{F}_p^2$, there exists a unique solution $a_1 = -x_1^{-1}(x_2 a_2 + x_3 a_3)$. This solution belongs to $\mathbb{F}_p^*$ if and only if $x_2 a_2 + x_3 a_3 \ne 0$. In $\mathbb{F}_p^2$, the non-trivial linear functional $x_2 a_2 + x_3 a_3$ vanishes at $p$ points and is non-zero at the remaining $p^2 - p$ points. Hence, there are exactly $p^2 - p = p(p-1)$ pairs $(a_2, a_3)$ yielding a valid $a_1 \in \mathbb{F}_p^*$, giving $N_0(x) = p(p-1)$.
    \end{itemize}
    In both subcases, the number of zero coordinates is $N_0(x) = p(p-1)$, and the resulting Hamming weight is:
    \[
    \wt(c(x)) = |D_{V_1}| - p(p-1) = p^2(p-1) - p(p-1) = p(p-1)^2.
    \]
    The total number of such evaluation vectors is $(p^3 - 1) - (p - 1) = (p - 1)(p^2 + p)$.
\end{itemize}

Now, let $c(x)$ be any codeword of minimum weight $\wt(c(x)) = p(p-1)^2$. To prove that $c(x)$ is minimal, suppose there exists a non-zero codeword $c(y) \in \mathcal{C}_{D_{V_1}}$ such that $\operatorname{supp}(c(y)) \subseteq \operatorname{supp}(c(x))$. In terms of coordinate zero sets, this containment is equivalent to $Z(x) \subseteq Z(y)$, where $Z(x) = \{v \in D_{V_1} : c(x)_v = 0\} = D_{V_1} \cap H_x$. Since $|Z(x)| = p(p-1)$, $Z(y)$ cannot be empty, which implies that $\wt(c(y)) = p(p-1)^2$ and $|Z(y)| = p(p-1)$. If $y$ is not an $\mathbb{F}_p^*$-scalar multiple of $x$, the hyperplanes $H_x$ and $H_y$ are distinct, and their intersection $W = H_x \cap H_y$ is a $1$-dimensional subspace in $\mathbb{F}_p^3$. By Lemma~\ref{lem:subspace-intersection}, any $1$-dimensional subspace intersects $D_{V_1}$ in at most $p - 1$ points. Consequently:
\[
|Z(x) \cap Z(y)| = |D_{V_1} \cap (H_x \cap H_y)| \le p - 1.
\]
However, the inclusion $Z(x) \subseteq Z(y)$ forces $Z(x) = Z(x) \cap Z(y)$, which implies $|Z(x)| \le p - 1$. This directly contradicts $|Z(x)| = p(p-1) > p - 1$ for all primes $p \ge 3$. Therefore, $H_x = H_y$, implying $y = \alpha x$ for some $\alpha \in \mathbb{F}_p^*$. Hence, every codeword of minimum weight $p(p-1)^2$ is strictly minimal.

Finally, testing the Griesmer bound for dimension $k = 3$:
\[
\sum_{i=0}^2 \left\lceil \frac{d}{p^i} \right\rceil = \left\lceil \frac{p(p-1)^2}{1} \right\rceil + \left\lceil \frac{p(p-1)^2}{p} \right\rceil + \left\lceil \frac{p(p-1)^2}{p^2} \right\rceil
\]
\[
= p(p-1)^2 + (p-1)^2 + (p-1) = p^2(p-1) = n.
\]
Because the code length achieves the Griesmer bound with equality, $\mathcal{C}_{D_{V_1}}$ is Griesmer-optimal.
\end{proof}

\begin{Theorem}\label{thm:weight-hierarchy}
Let $D_{V_1} = uR \setminus u^2 R$ be the maximal independent defining set of $\Gamma(R)$, and let $\mathcal{C}_{D_{V_1}}$ be the associated $[p^2(p-1), 3, p(p-1)^2]_p$ linear code. The complete Wei generalized Hamming weight hierarchy $\{d_1, d_2, d_3\}$ of $\mathcal{C}_{D_{V_1}}$ is given explicitly by:
\[
d_1 = p(p-1)^2, \quad d_2 = p^3 - 2p^2 + 2p - 1 = (p-1)(p^2 - p + 1), \quad d_3 = p^2(p-1).
\]
Furthermore, the weights $d_1$ and $d_3$ attain the generalized Griesmer bound with equality, while $d_2$ achieves the bound with an optimal defect of strictly $1$:
\[
\sum_{i=0}^{3-r} \left\lceil \frac{d_r}{p^i} \right\rceil = 
\begin{cases}
n, & \text{for } r \in \{1, 3\}, \\
n + 1, & \text{for } r = 2.
\end{cases}
\]
\end{Theorem}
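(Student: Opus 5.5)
The plan is to use the standard geometric description of generalized Hamming weights for an evaluation code with a projective-type defining set. Identify $D_{V_1}$ with its coordinate projections $(a_1,a_2,a_3)\in\mathbb{F}_p^*\times\mathbb{F}_p^2$, as in Lemma~\ref{lem:subspace-intersection}. Every $r$-dimensional subcode of $\mathcal{C}_{D_{V_1}}$ is $\{c(x):x\in X\}$ for an $r$-dimensional subspace $X\subseteq\mathbb{F}_p^3$ (the $x_0$-coordinate is irrelevant). Its support is $D_{V_1}\setminus X^{\perp}$, where $X^\perp$ has dimension $3-r$. Hence
\[
d_r \;=\; n-\max_{\dim U=3-r}|D_{V_1}\cap U|, \qquad n=p^2(p-1).
\]

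First I evaluate the extreme cases. For $r=3$ we have $U=\{0\}$, so $d_3=n=p^2(p-1)$. For $r=1$, $U$ is a plane and the maximum intersection is $p(p-1)$, which is exactly the hyperplane count $N_0(x)$ from the proof of Theorem~\ref{thm:indep-code}; this gives $d_1=p(p-1)^2$, in agreement with that theorem. For $r=2$, $U$ is a line, and Lemma~\ref{lem:subspace-intersection} supplies the maximum intersection of a line with $D_{V_1}$. I then substitute each $d_r$ into $\sum_{i=0}^{3-r}\lceil d_r/p^i\rceil$. For $r=3$ the sum is just $d_3=n$. For $r=1$ it is the Griesmer computation already done in Theorem~\ref{thm:indep-code}. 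For $r=2$ it is $d_2+\lceil d_2/p\rceil$, to be compared with $n$.

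The main obstacle, which I expect to be decisive, is the $r=2$ step. The stated value $d_2=(p-1)(p^2-p+1)$ requires $\max_{\dim U=1}|D_{V_1}\cap U|=n-d_2=(p-1)^2$. However, Lemma~\ref{lem:subspace-intersection} states that this maximum is $p-1$, which would give $d_2=(p-1)(p^2-1)$. The stated value is also incompatible with the standard Griesmer-type lower bound $d_2\ge d_1+\lceil d_1/p\rceil$. For $p=3$ that bound gives $d_2\ge 12+4=16$, whereas the statement gives $14$.

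So before writing the proof I would first check the case $p=3$ by listing all lines of $\mathbb{F}_3^3$ against the $18$ points of $D_{V_1}$. If the lemma's count $p-1$ is confirmed, the argument above proves $d_1$ and $d_3$ and their Griesmer equalities. In that case the claimed $d_2$ and its defect-$1$ identity cannot be derived as stated; instead one obtains $d_2=(p-1)(p^2-1)$, with $d_2+\lceil d_2/p\rceil=n+(p-2)(p-1)$. Only if a different counting convention for the $r=2$ support were intended would the stated formula be recoverable; in that case I would redo the maximization over lines under that convention.
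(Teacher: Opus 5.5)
Your diagnosis is correct, and your route is the paper's own. The paper also uses $d_r=n-\max_{\operatorname{codim}H=r}|D_{V_1}\cap H|$ and Lemma~\ref{lem:subspace-intersection}, and obtains $d_2=p^2(p-1)-(p-1)=(p^2-1)(p-1)$.

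The paper's error is the very next equality, $(p^2-1)(p-1)=p^3-2p^2+2p-1$. In fact $(p^2-1)(p-1)=p^3-p^2-p+1$. By contrast, $p^3-2p^2+2p-1=(p-1)(p^2-p+1)$, which is smaller by $(p-1)(p-2)>0$. Two later claims inherit this mis-expansion: the ``defect $1$'' computation, and the $p=3$ hierarchy $\{12,14,18\}$ in Example~\ref{ex:defining-codes}. The true hierarchy for $p=3$ is $\{12,16,18\}$.

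So the stated $d_2$ and the $r=2$ case of the displayed identity are false for every odd prime, and no proof can establish them. The $d_1$ and $d_3$ parts, with their Griesmer equalities, go through as you describe.

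You can skip the $p=3$ enumeration, for three reasons:
\begin{itemize}
\item The lemma is plainly correct: a line $\operatorname{span}\{b\}$ has all $p-1$ nonzero points in $D_{V_1}$ if $b_1\neq 0$, and none otherwise.
\item The stated value would need a line through $(p-1)^2>p-1$ points of $D_{V_1}$, which is impossible.
\item Your averaging bound gives $d_2\ge d_1+\lceil d_1/p\rceil=(p+1)(p-1)^2$, and the true value $d_2=(p-1)(p^2-1)$ attains it with equality.
\end{itemize}

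There is one slip in your own closing computation. For $d_2=p^3-p^2-p+1$ one has $\lceil d_2/p\rceil=p^2-p$. Hence $d_2+\lceil d_2/p\rceil=p^3-2p+1=n+(p-1)^2$, not $n+(p-2)(p-1)$. For $p=3$ this is $16+6=22=18+4$.

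Because this exceeds $n$, the sum $\sum_{i=0}^{3-r}\lceil d_r/p^i\rceil$ used in the statement is not a valid bound for $r=2$ in the first place. The correct generalized Griesmer bound (Helleseth--Kl{\o}ve--Ytrehus) is $n\ge d_r+\sum_{i=1}^{k-r}\lceil (p-1)d_r/(p^i(p^r-1))\rceil$. Here it reads $n\ge d_2+\lceil d_2/(p(p+1))\rceil$. A direct check: each coordinate lies outside the support of exactly one of the $p^2+p+1$ two-dimensional subcodes.

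The true $d_2=(p-1)^2(p+1)$ meets this bound with equality, since $\lceil (p-1)^2/p\rceil=p-1$. That equality is the correct $r=2$ statement to prove in place of the claimed defect $1$.
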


\begin{proof}
For an $[n, k, d]_p$ linear code $\mathcal{C}$ and an integer $1 \le r \le k$, the $r$-th generalized Hamming weight $d_r(\mathcal{C})$ is defined as: $d_r(\mathcal{C}) = \min \left\{ |\operatorname{supp}(\mathcal{H})| : \mathcal{H} \le \mathcal{C}, \; \dim(\mathcal{H}) = r \right\}$, where $\operatorname{supp}(\mathcal{H}) = \bigcup_{c \in \mathcal{H}} \operatorname{supp}(c)$.

By Theorem~\ref{thm:indep-code}, $\dim(\mathcal{C}_{D_{V_1}}) = 3$ and the minimum Hamming distance is $d_1 = d = p(p-1)^2$. For $r = k = 3$, the unique $3$-dimensional subcode is $\mathcal{C}_{D_{V_1}}$ itself. As shown in Case 1 of Theorem~\ref{thm:indep-code}, the evaluation vector $x = (0, 1, 0, 0)$ produces a codeword whose support is the entire coordinate set $D_{V_1}$. Thus $d_3(\mathcal{C}_{D_{V_1}}) = |D_{V_1}| = p^2(p-1)$.

It remains to determine $d_2(\mathcal{C}_{D_{V_1}})$. By the geometric characterization of generalized Hamming weights, for an evaluation code $\mathcal{C}_D$ generated by an ordered defining set $D \subset \mathbb{F}_p^m$, the $r$-th weight satisfies:
\[
d_r(\mathcal{C}_D) = |D| - \max \left\{ |D \cap H| : H \text{ is a subspace of } \mathbb{F}_p^m \text{ with } \operatorname{codim}(H) = r \right\}.
\]
For $r = 2$, this reduces to finding the maximum number of points in $D_{V_1}$ that simultaneously lie in two distinct linear hyperplanes $H_1, H_2 \subset \mathbb{F}_p^3$:
\[
d_2(\mathcal{C}_{D_{V_1}}) = |D_{V_1}| - \max_{H_1 \ne H_2} \left| D_{V_1} \cap (H_1 \cap H_2) \right|.
\]
The intersection $W = H_1 \cap H_2$ is a $1$-dimensional subspace in $\mathbb{F}_p^3$. By Lemma~\ref{lem:subspace-intersection}, the maximum intersection of $D_{V_1}$ with any $1$-dimensional subspace is strictly $M_2 = p - 1$. Subtracting this value from the code length yields:
\[
d_2(\mathcal{C}_{D_{V_1}}) = |D_{V_1}| - M_2 = p^2(p - 1) - (p - 1) = (p^2 - 1)(p - 1) = p^3 - 2p^2 + 2p - 1.
\]

Finally, we test the generalized Griesmer bound $n \ge \sum_{i=0}^{3-r} \lceil d_r / p^i \rceil$ for each weight $r \in \{1, 2, 3\}$:
\begin{itemize}
    \item For $r = 1$: $\sum_{i=0}^2 \lceil d_1 / p^i \rceil = p^2(p-1) = n$.
    \item For $r = 3$: $\sum_{i=0}^0 \lceil d_3 / p^i \rceil = d_3 = p^2(p-1) = n$.
    \item For $r = 2$: the Griesmer sum evaluates to:
    \[
    \sum_{i=0}^1 \left\lceil \frac{d_2}{p^i} \right\rceil = d_2 + \left\lceil \frac{d_2}{p} \right\rceil = (p^3 - 2p^2 + 2p - 1) + \left\lceil p^2 - 2p + 2 - \frac{1}{p} \right\rceil
    \]
    \[
    = (p^3 - 2p^2 + 2p - 1) + (p^2 - 2p + 2) = p^3 - p^2 + 1 = p^2(p-1) + 1 = n + 1.
    \]
\end{itemize}
Thus, the higher weight $d_2$ attains the generalized Griesmer bound with an optimal defect of strictly $1$, while $d_1$ and $d_3$ achieve the bound with equality.
\end{proof}

\begin{Remark}[Cryptographic Access Structures via Minimal Codewords]\label{rem:secret-sharing}
In Massey's secret-sharing framework \cite{Ashikhmin1998, Ding2007}, minimal authorized recovery coalitions correspond directly to minimal codewords possessing a non-zero coordinate at the secret position. For the two-weight evaluation code $\mathcal{C}_{D_{V_1}}$, while the $p - 1$ maximum-weight codewords have support spanning the entire coordinate set $D_{V_1}$, each of the $(p-1)(p^2 + p)$ minimum-weight codewords is strictly minimal by Theorem~\ref{thm:indep-code}. Consequently, each minimum-weight codeword defines a minimal authorized coalition whose participants can reconstruct the secret, with no proper sub-coalition possessing enough information to do so.
\end{Remark}

\begin{Example}\label{ex:defining-codes}
For $p = 3$, consider the ring $R = \mathbb{F}_3 + u\mathbb{F}_3 + u^2\mathbb{F}_3 + u^3\mathbb{F}_3$ ($u^4=0$). 
\begin{enumerate}
    \item Maximal Clique Code $\mathcal{C}_{D_K}$: The defining set $D_K = u^2 R \setminus \{0\}$ contains $n = 3^2 - 1 = 8$ vertices:
    \[
    D_K = \{u^2, \, 2u^2, \, u^3, \, 2u^3, \, u^2+u^3, \, 2u^2+u^3, \, u^2+2u^3, \, 2u^2+2u^3\}.
    \]
    Evaluating the linear forms $c(x) = (x \cdot [d])_{d \in D_K}$ for all evaluation vectors $x \in \mathbb{F}_3^4$ yields a ternary linear code $\mathcal{C}_{D_K}$ with parameters $[8, 2, 6]_3$. Its complete weight enumerator is $W_{\mathcal{C}_{D_K}}(z) = 1 + 8z^6$. Since $\frac{w_{\max}}{w_{\min}} = \frac{6}{6} = 1 < \frac{3}{2}$, $\mathcal{C}_{D_K}$ is a minimal linear code meeting the Griesmer bound:
    \[
    \left\lceil \frac{6}{1} \right\rceil + \left\lceil \frac{6}{3} \right\rceil = 6 + 2 = 8.
    \]

    \item Maximal Independent Set Code $\mathcal{C}_{D_{V_1}}$: The defining set $D_{V_1} = uR \setminus u^2 R$ has length $n = 3^2(3-1) = 18$. Evaluating the linear forms $c(x) = (x \cdot [d])_{d \in D_{V_1}}$ for all $x \in \mathbb{F}_3^4$ yields an optimal two-weight $[18, 3, 12]_3$ linear code with weights $w_{\min} = 3(3-1)^2 = 12$ and $w_{\max} = 3^2(3-1) = 18$. Its complete weight enumerator is:
    \[
    W_{\mathcal{C}_{D_{V_1}}}(z) = 1 + 24 z^{12} + 2 z^{18}.
    \]
    By Theorem~\ref{thm:weight-hierarchy}, the generalized Hamming weight hierarchy is $\{d_1, d_2, d_3\} = \{12, 14, 18\}$. Testing the generalized Griesmer sums:
    \begin{itemize}
        \item For $r = 1$: $\lceil 12/1 \rceil + \lceil 12/3 \rceil + \lceil 12/9 \rceil = 12 + 4 + 2 = 18 = n$.
        \item For $r = 2$: $\lceil 14/1 \rceil + \lceil 14/3 \rceil = 14 + 5 = 19 = n + 1$ (defect $1$).
        \item For $r = 3$: $\lceil 18/1 \rceil = 18 = n$.
    \end{itemize}
\end{enumerate}
\end{Example}

\section*{Conclusion }
In this paper, we investigated the algebraic coding-theoretic properties of the zero-divisor graph $\Gamma(R)$ over the finite local ring $R = \mathbb{F}_{p}[x]/\langle x^{4}\rangle \cong \mathbb{F}_{p}+u\mathbb{F}_{p}+u^{2}\mathbb{F}_{p}+u^{3}\mathbb{F}_{p}$ ($u^4=0$). Utilizing an equitable partition of $Z^*(R)$, we fully parameterized binary incidence and cycle codes, enhanced dual error detection via induced bipartite subgraphs, and evaluated $p$-ary codes generated by the adjacency and Laplacian matrices, including closed-form weight enumerators, LCD properties, and full automorphism groups. Furthermore, using extremal subgraphs as defining sets, we constructed two families of linear codes achieving the Griesmer bound with equality, consisting of a constant-weight minimal code and an optimal two-weight code. Future work will focus on generalizing these combinatorial constructions to finite local rings with arbitrary nilpotency index $\mathbb{F}_{p}[x]/\langle x^{n}\rangle$ ($n \ge 5$). Additionally, it would be interesting to explore the cryptographic applications of the resulting minimal codes in designing secret-sharing schemes with democratic access structures.

\section*{Declarations}
The authors received no external funding or financial support for conducting this research or preparing this manuscript.  The graph visualizations illustrating the equitable partition and vertex-level topology are provided in Appendix~A. All authors contributed equally to the theoretical formulation, derivations, and drafting of the manuscript, and all approved the final version for submission.

\section*{Conflict of Interest}
The authors declare that they have no conflicts of interest, financial or non-financial, directly or indirectly related to the research presented in this paper. No personal, commercial, or academic affiliations have influenced the objectivity or integrity of the reported work.

\newpage 
\appendix

\section{Graph Visualizations of $\Gamma(R)$}
\label{app:figures}

This appendix presents graphical visualizations of the zero-divisor graph $\Gamma(R)$ for the local ring $R = \mathbb{F}_p + u\mathbb{F}_p + u^2\mathbb{F}_p + u^3\mathbb{F}_p$ ($u^4=0$). Figure 1 depicts the equitable partition of the non-zero zero-divisors $Z^*(R)$ into its seven $u$-adic classes, emphasizing the central dominating role of the subring ideal class $S_{u^3}$. Figure~2 illustrates the complete vertex-level topology of $\Gamma(R)$ for the case $p=3$, displaying all $26$ vertices and $64$ incident edges.
\begin{figure}[htbp]\label{fig:appendix_fig1}
\centering
\begin{tikzpicture}[
    thick,
    centerNode/.style={
        ellipse, minimum width=1.8cm, minimum height=2.6cm,
        draw=orange!80!black, top color=orange!10, bottom color=orange!30,
        font=\bfseries, align=center
    },
    outerNode/.style={
        ellipse, minimum width=2.0cm, minimum height=1.4cm,
        draw=teal!80!black, top color=teal!5, bottom color=teal!20,
        font=\small, align=center
    },
    divider/.style={thick, draw=gray!70, dashed},
    connection/.style={very thick, draw=gray!60},
    specialConnection/.style={very thick, draw=gray!70}
]
    \def\R{4.5}
    \node[centerNode] (Center) at (0,0) {$S_{u^3}$};
    \draw[divider,solid] ($(Center.north)+(0,-0.2)$) -- ($(Center.south)+(0,0.2)$);

    \foreach \ang/\name/\lab/\hasline in {
        0/N1/{$S_{u^2+u^3}$}/1,
        60/N2/{$S_{u^2}$}/1,
        120/N3/{$S_{u}$}/0,
        180/N4/{$S_{u+u^2}$}/0,
        240/N5/{$S_{u+u^3}$}/0,
        300/N6/{$S_{u+u^2+u^3}$}/0
    } {
        \node[outerNode] (\name) at (\ang:\R) {\lab};
        \draw[connection] (Center) -- (\name);
        \ifnum\hasline=1
            \draw[divider,solid] ($(\name.north)+(0,-0.1)$) -- ($(\name.south)+(0,0.1)$);
        \fi
    }
    \draw[specialConnection] (N1) -- (N2);
\end{tikzpicture}
\caption{Zero-divisor graph of $R = \mathbb{F}_{p} + u\mathbb{F}_{p} + u^{2}\mathbb{F}_{p} + u^{3}\mathbb{F}_{p}$.}
\label{Fig1}
\end{figure}
\clearpage

\begin{figure}[htbp]\label{fig:appendix_fig2}
\centering
\begin{tikzpicture}[scale=0.9, transform shape]
\draw[thick] (0,0) ellipse (1.4 and 2.4);
\node[above=2.6cm] at (0,0) {$S_{u^3}$};

\foreach \angle/\name in {
    0/S_{u^2+u^3}, 60/S_{u^2}, 120/S_{u},
    180/S_{u+u^2}, 240/S_{u+u^3}, 300/S_{u+u^2+u^3}
}{
    \draw[thick] ({6*cos(\angle)},{6*sin(\angle)}) ellipse (1.2 and 2.2);
    \node[above=2.4cm] at ({6*cos(\angle)},{6*sin(\angle)}) {$\name$};
}

\node[circle, fill=black, inner sep=1.5pt, label=above:$u^3$] (a0u) at (0,1) {};
\node[circle, fill=black, inner sep=1.5pt, label=below:$2u^3$] (a0v) at (0,-1) {};

\node[circle, fill=black, inner sep=1.5pt, label=right:{$u^2+u^3$}] (a11) at (6,1.2) {};
\node[circle, fill=black, inner sep=1.5pt, label=right:{$2u^2+u^3$}] (a12) at (6,0.4) {};
\node[circle, fill=black, inner sep=1.5pt, label=right:{$u^2+2u^3$}] (a13) at (6,-0.4) {};
\node[circle, fill=black, inner sep=1.5pt, label=right:{$2u^2+2u^3$}] (a14) at (6,-1.2) {};

\node[circle, fill=black, inner sep=1.5pt, label=right:{$u^2$}]  (a21) at ({6*cos(60)},{6*sin(60)+0.6}) {};
\node[circle, fill=black, inner sep=1.5pt, label=right:{$2u^2$}] (a22) at ({6*cos(60)},{6*sin(60)-0.6}) {};

\node[circle, fill=black, inner sep=1.5pt, label=left:{$u$}]  (a31) at ({6*cos(120)},{6*sin(120)+0.6}) {};
\node[circle, fill=black, inner sep=1.5pt, label=left:{$2u$}] (a32) at ({6*cos(120)},{6*sin(120)-0.6}) {};

\node[circle, fill=black, inner sep=1.5pt, label=left:{$u+u^2$}]   (a41) at ({6*cos(180)},{6*sin(180)+1.2}) {};
\node[circle, fill=black, inner sep=1.5pt, label=left:{$2u+u^2$}]  (a42) at ({6*cos(180)},{6*sin(180)+0.4}) {};
\node[circle, fill=black, inner sep=1.5pt, label=left:{$u+2u^2$}]   (a43) at ({6*cos(180)},{6*sin(180)-0.4}) {};
\node[circle, fill=black, inner sep=1.5pt, label=left:{$2u+2u^2$}] (a44) at ({6*cos(180)},{6*sin(180)-1.2}) {};

\node[circle, fill=black, inner sep=1.5pt, label=left:{$u+u^3$}]   (a51) at ({6*cos(240)},{6*sin(240)+1.2}) {};
\node[circle, fill=black, inner sep=1.5pt, label=left:{$2u+u^3$}]  (a52) at ({6*cos(240)},{6*sin(240)+0.4}) {};
\node[circle, fill=black, inner sep=1.5pt, label=left:{$u+2u^3$}]   (a53) at ({6*cos(240)},{6*sin(240)-0.4}) {};
\node[circle, fill=black, inner sep=1.5pt, label=left:{$2u+2u^3$}] (a54) at ({6*cos(240)},{6*sin(240)-1.2}) {};

\foreach \i/\y/\lbl in {
  1/1.4/$u+u^2+u^3$, 2/1.0/$2u+u^2+u^3$, 3/0.6/$u+2u^2+u^3$, 4/0.2/$u+u^2+2u^3$,
  5/-0.2/$2u+2u^2+u^3$, 6/-0.6/$2u+u^2+2u^3$, 7/-1.0/$u+2u^2+2u^3$, 8/-1.4/$2u+2u^2+2u^3$
}{
  \node[circle, fill=black, inner sep=1.3pt, label=right:{\lbl}]
    (a6\i) at ({6*cos(300)},{6*sin(300)+\y}) {};
}

\foreach \c in {a0u,a0v}{
  \foreach \n in {a11,a12,a13,a14}{\draw (\c) -- (\n);}
  \foreach \n in {a21,a22}{\draw (\c) -- (\n);}
  \foreach \n in {a31,a32}{\draw (\c) -- (\n);}
  \foreach \n in {a41,a42,a43,a44}{\draw (\c) -- (\n);}
  \foreach \n in {a51,a52,a53,a54}{\draw (\c) -- (\n);}
  \foreach \i in {1,...,8}{\draw (\c) -- (a6\i);}
}

\foreach \nA in {a11,a12,a13,a14}{
  \foreach \nB in {a21,a22}{\draw (\nA) -- (\nB);}
}

\draw (a0u) -- (a0v);
\draw[red, bend left=20, line width=0.8pt] (a11) to (a12);
\draw[blue, bend left=30, line width=0.8pt] (a11) to (a13);
\draw[green!70!black, bend left=40, line width=0.8pt] (a11) to (a14);
\draw[orange, bend left=20, line width=0.8pt] (a12) to (a13);
\draw[purple, bend left=30, line width=0.8pt] (a12) to (a14);
\draw[teal, bend left=20, line width=0.8pt] (a13) to (a14);
\draw (a21) -- (a22);

\end{tikzpicture}
\caption{Zero-divisor graph of $R = \mathbb{F}_3 + u\mathbb{F}_3 + u^2\mathbb{F}_3 + u^3\mathbb{F}_3$.}
\label{Fig2}
\end{figure}

\end{document}